\tikzstyle{noeud} = [circle, draw, fill=white, inner sep=2pt]
\journal{Theoretical Computer Science}
\newtheorem{theorem}{Theorem}[section]
\newtheorem{lemma}{Lemma}[section]
\newdefinition{remark}{Remark}[section]
\newdefinition{definition}{Definition} \newdefinition{example}{Example}
\patchcmd\Gread@eps{\@inputcheck#1 }{\@inputcheck"#1"\relax}{}{}
\pgfplotsset{compat=1.15}
\newtheorem{myclaim}{Claim}
\newtheorem{problem}{Problem}
\journal{ArXiv}
\begin{document}
\usetikzlibrary{arrows}
\begin{frontmatter}

\title{When Agents are Powerful: \\Black Hole Search with Verification in Time-Varying Graphs \footnote{A preliminary version \cite{Ashish_2026} was accepted at ICDCIT~2026.}}

\author[1]{Tanvir Kaur}
\author[1]{Ashish Saxena \footnote{Corresponding author}}
\affiliation[1]{organization={Department of Mathematics},
            addressline={Indian Institute of Technology Ropar}, 
            city={Rupnagar},
            postcode={140001}, 
            state={Punjab},
            country={India}}

\begin{abstract}
A black hole is a harmful node in a graph that destroys any agent entering it, making its identification a critical task. In the \emph{Black Hole Search with Verification (BHSV)} problem, a team of agents operates on a graph $G$ with the objective that at least one agent survives and correctly identifies an edge incident to the black hole; if no black hole exists, then all agents must terminate. 

Prior work has studied BHS in arbitrary dynamic graphs under the restrictive \emph{face-to-face} communication model, where agents can exchange information only when co-located. This constraint significantly increases the number of agents required to solve the problem. In this work, we strengthen the capabilities of agents by equipping them with (i) \emph{1-hop visibility}, (ii) \emph{global communication}, and (iii) both \emph{1-hop visibility} and \emph{global communication}. We show that these enhancements lead to more efficient solutions for the BHSV problem in dynamic graphs.
\end{abstract}

\begin{keyword}
Dynamic Graphs,
Black Hole Search,
Mobile Agents,
Distributed Algorithms,
Deterministic Algorithms.
\end{keyword}

\end{frontmatter}
\section{Introduction}
In many distributed systems, the network cannot be assumed to be fully reliable. Real-world systems are prone to faults: agents may crash, communication links may intermittently fail, or nodes may behave maliciously, including corrupting data or destroying visiting agents. One particularly dangerous fault is a \emph{hostile node} that destroys any agent entering it without leaving any trace of its destruction. Such a node, denoted by $v_{BH}$, is called a \emph{black hole}. In the literature, two variants of the \emph{Black Hole Search (BHS)} problem have been studied: 
(i) at least one agent must survive and produce a map of the network indicating all edges leading to $v_{BH}$, and 
(ii) at least one agent must survive and learn \emph{at least one} edge that leads to $v_{BH}$. 
The BHS problem has been extensively studied on static graphs \cite{bhs_tokens, complexity_black_hole, Paola_2006}.

Recently, researchers have begun to investigate fundamental distributed problems such as exploration \cite{Nicolo_21}, dispersion \cite{dynamic_dispersion}, and gathering \cite{LunaTCS2020} in \emph{dynamic graphs}, which more accurately model networks whose connectivity changes over time. In the synchronous setting, time is divided into discrete rounds, and a dynamic graph $\mathcal{G}$ is represented as a sequence of static graphs $\mathcal{G}_0, \mathcal{G}_1, \mathcal{G}_2, \ldots$, where $\mathcal{G}_r$ denotes the network at round $r$. This evolving representation is referred to as a \emph{time-evolving} or \emph{dynamic} graph. A natural question is whether the first variant of the BHS problem remains solvable in dynamic graphs. The answer is negative: since an edge leading to $v_{BH}$ may not appear in every round $r \ge 0$, it is impossible to guarantee identification of all such edges. Therefore, only the second variant of the BHS problem remains meaningful in dynamic settings. This variant has been considered in previous work on dynamic graphs \cite{Luna_2025, bhattacharya_2023,Adri_tori, BHS_gen}. While several results are known for restricted classes of dynamic graphs, only two works address the case of \emph{general} dynamic graphs \cite{BHS_gen,Tanvir_Scat}. In that work, it is ensured that at least one agent reaches a neighbor of $v_{BH}$ and identifies a port leading to it.

The existing literature on BHS implicitly assumes that the network contains exactly one black hole node. However, in practice, it may be the case that no such destructive node is present at all. This raises an important question: how should the problem be formulated when the presence of a black hole itself is not guaranteed? Addressing this question leads to a broader formulation of the search task, one that must allow agents not only to detect a black hole when it exists. Otherwise, all agents must terminate. Based on that, we have the following definition.
\begin{definition}[Black Hole Search with Verification (BHSV)]
\emph{A black hole $v_{BH}$ is said to be found if there exists a black hole in the network and at least one agent present at an adjacent node (to $v_{BH}$), say $v$ identifies a port $p$ associated with $v$ leading to $v_{BH}$. If no black hole exists in the network, then all agents must terminate.}
\end{definition}

In \cite{Luna_2025}, it is shown that two agents cannot determine the location of the black hole $v_{BH}$ in 1-interval connected rings, whereas three agents are sufficient to locate $v_{BH}$ in 1-interval connected dynamic rings. This naturally leads to the following question: using three agents, is it possible to guarantee that at least one agent reaches a neighbour of $v_{BH}$ and correctly identifies the port leading to $v_{BH}$? The \underline{motivation} behind this is the following. If one of the agents can correctly identify the port that leads to $v_{BH}$, then that agent can be endowed with the ability to repair the network. On the other hand, if no black hole is present in the network, the agents should be able to correctly recognize this and terminate. The \underline{motivation} here is that, in existing algorithms, agents continue moving indefinitely while searching for $v_{BH}$. In the case where no black hole exists, such perpetual movement wastes time and energy without contributing to the task. Our problem definition avoids this unnecessary activity by allowing agents to terminate when the network is safe.

Most previous work on the BHS problem assumes the \textit{face-to-face} (f-2-f) communication model, where agents can communicate only when co-located~\cite{Luna_2025,bhattacharya_2023,Adri_tori,BHS_gen,Tanvir_Scat}. This restriction limits coordination and often increases the number of agents or the time required to solve the problem. In~\cite{BHS_gen,Tanvir_Scat}, the authors study BHS in dynamic general graphs; however, there remains a significant gap between the known lower and upper bounds on the number of agents. To reduce this gap, we study the BHS problem under two stronger models: (i) the \textit{global communication model}, where agents can communicate regardless of their positions, and (ii) the \textit{1-hop visibility model}, where each agent can observe its immediate neighbourhood, including the presence of agents at adjacent nodes. While these models provide more power than f-2-f communication, they also introduce new challenges. For example, under global communication, agents can exchange information but may not know how to physically reach one another due to the lack of structural knowledge of the network. To the best of our knowledge, this is the first work to study BHSV under communication models beyond f-2-f. In the next section, we present the detailed model and problem definition.

\subsection{Model and the problem}\label{sec:model}
\noindent \textbf{Dynamic graph model}: A dynamic network is modeled as a \emph{time-varying graph (TVG)}, denoted by \( \mathcal{G} = (V, E, T, \rho) \), where \( V \) is a set of nodes, \( E \) is a set of edges, \( T \) is the discrete temporal domain, and \( \rho : E \times T \rightarrow \{0, 1\} \) is the presence function, which indicates whether a given edge is available at a given time. The static graph \( G = (V, E) \) is referred to as the underlying graph (or footprint) of the TVG \( \mathcal{G} \), where \( |V| = n \) and \( |E| = m \). For a node \( v \in V \), let \( E(v) \subseteq E \) denote the set of edges incident on \( v \) in the footprint. The degree of node \( v \) is defined as \( \deg(v) = |E(v)| \). Nodes in $V$ are anonymous. Each node is equipped with storage, and each edge incident to a node \( v \) is locally labeled with a port number. This labeling is defined by a bijective function \( \lambda_v : E(v) \rightarrow \{0, \ldots, \delta_v - 1\} \), which assigns a distinct label to each incident edge of \( v \). Since the time is discrete, the TVG \( \mathcal{G} \) can be viewed as a sequence of static graphs \( \mathcal{S}_{\mathcal{G}} = \mathcal{G}_0, \mathcal{G}_1, \ldots, \mathcal{G}_r, \ldots \), where each \( \mathcal{G}_r = (V, E_r) \) denotes the snapshot of \( \mathcal{G} \) at round \( r \), with \( E_r = \{ e \in E \mid \rho(e, r) = 1 \} \). The set of edges not present at time \( r \) is denoted by \( \overline{E}_r = E \setminus E_r \subseteq E \). There may be a node $v_{BH}$ in $G$ which is a black hole, and its degree is denoted by $\delta_{BH}$. A node that is not a black hole, we call it a \underline{safe node}.
Dynamic graphs can be classified based on how their topological changes affect connectivity. A commonly used restriction is 1-interval connectivity, and a further refinement is its bounded variant.

\begin{definition}
     (1-interval connectivity) A dynamic graph \( \mathcal{G} \) is \emph{1-interval connected} (or \emph{always connected}) if every snapshot \( \mathcal{G}_r \in \mathcal{S}_\mathcal{G} \) is connected.
\end{definition}

\begin{definition}
 (\( \ell \)-bounded 1-interval connectivity) A dynamic graph \( \mathcal{G} \) is said to be \( \ell \)-bounded 1-interval connected if it is always connected and \( |\overline{E}_r| \leq \ell\).
\end{definition}

We call a node a \underline{hole} if it has no agent, and call a node a \underline{multinode} if at least two agents are present at that node.

\medskip
\noindent\textbf{Agent}: We consider $k$ agents to be present arbitrarily at safe nodes of the graph $G$ in the initial configuration. Each agent has a unique identifier assigned from the range $[1,\,n^c]$, where $c$ is a constant. Each agent knows its ID but is unaware of the other agents' IDs. Agents are not aware of the values of \( n \), \( k \), or \( c \) unless stated otherwise. The agents are equipped with memory. An agent residing at a node, say $v$, in round $r$ knows $\deg(v)$ in the footprint $G$ and all associated ports corresponding to node $v$ in $G$. We call the initial configuration of agents \underline{rooted} if all $k$ agents are co-located at a node; otherwise, we call it \underline{scattered}.

\medskip
\noindent \textbf{Communication model:} We consider two communication models: (i) face-to-face (f-2-f) communication~\cite{Augustine_2018}, meaning agents can only communicate if they are co-located at the same node, and (ii) global communication~\cite{Ajay_dynamicdisp}, allowing agents to exchange messages regardless of their positions in the network.

\medskip
\noindent \textbf{Visibility model:} We use three types of visibility models: 0-hop visibility, 1-hop visibility and full visibility. In the 0-hop visibility~\cite{Augustine_2018}, an agent at a node \( v \in \mathcal{G} \) can see the IDs of agents present at \( v \) in round \( r \), as well as the port numbers at \( v \), but nothing beyond that. In the 1-hop visibility model~\cite{Agarwalla_2018}, an agent \( a_i \) at node \( v \) can also see all neighbors of \( v \), including the IDs of agents (if any) at those neighboring nodes. Let \( e_v \) be an edge incident to node \( v \). In the 0-hop visibility model, agents cannot determine the value of \( \rho(e_v, r) \) at the beginning of round \( r \). In contrast, under the 1-hop visibility model, they can determine this value at the beginning of round \( r \). In full visibility, at round $r$, the agent can see $\mathcal{G}_r$ as well as agents' positions in $\mathcal{G}_r$. 

\medskip
\noindent The algorithm runs in synchronous rounds. In each round $r$, an agent $a_i$ performs one \textit{Communicate-Compute-Move} (CCM) cycle as follows:
\begin{itemize}
    \item \textbf{Communicate:} Agent $a_i$ at node $v_i$ can communicate with other agents $a_j$ present at the same node $v_i$ or present at any different node $v_j$, depending on the communication model used. The agent also understands whether it had a successful or an unsuccessful move in the last round.
    \item \textbf{Compute:} Based on the information the agent has, the agent computes the port through which it will move or decides not to move at all.
    \item \textbf{Move:} Agent moves via the computed port or stays at its current node. 
\end{itemize}

In this work, we study the following two problems. 

\begin{problem}[1-BHSV]
Let $\mathcal{G}$ be a 1-bounded 1-interval connected dynamic graph. Suppose $k$ agents are initially placed at safe nodes of $\mathcal{G}$. The \emph{1-BHSV} problem is said to be solved if the following holds. If $v_{BH}$ exists in $\mathcal{G}$, then at least one agent is guaranteed to reach a neighbor $v$ of $v_{BH}$ at some round $t$, and identify a port of $v$ at round $t$ that leads to $v_{BH}$. Otherwise, if no $v_{BH}$ exists in $\mathcal{G}$, then all agents must terminate.
\end{problem}

\begin{problem}[BHSV in 1-interval connected]
Let $\mathcal{G}$ be a 1-interval connected dynamic graph. Suppose $k$ agents are initially placed at safe nodes of $\mathcal{G}$. The \emph{BHSV} problem in 1-interval connected is said to be solved if the following holds. If $v_{BH}$ exists in $\mathcal{G}$, then at least one agent is guaranteed to reach a neighbor $v$ of $v_{BH}$ at some round $t$, and identify a port of $v$ at round $t$ that leads to $v_{BH}$. Otherwise, if no $v_{BH}$ exists in $\mathcal{G}$, then all agents must terminate.
\end{problem}

In the next section, we present the existing work on BHS.

\subsection{Related work}
The BHS problem was first introduced by Dobrev et al.~\cite{Paola_2006} and has been extensively studied in static graphs~\cite{Pelc_2005, Paola_2010, Shantanu_2011, MarkouP12}. 

More recently, attention has shifted to dynamic graphs, where the network topology evolves over time. In this setting, most work focuses on $1$-BHS and $f$-BHS in $f$-bounded $1$-interval connected graphs with a single black hole. These problems have been studied for specific graph classes such as rings~\cite{Luna_2025}, cactus graphs~\cite{bhattacharya_2023}, and tori~\cite{Adri_tori}, and more recently for general graphs~\cite{BHS_gen}. The authors of~\cite{BHS_gen} show that $1$-BHS requires at least $2\delta_{BH}$ agents in the scattered setting, and that no solution exists for $f$-BHS with $2f+1$ co-located agents, even with unbounded memory. However, their algorithms assume a rooted initial configuration. In~\cite{Tanvir_Scat}, the authors study a more general scattered setting for $1$-BHS using $2\delta_{BH}+17$ agents. In this work, we further extend the study of $1$-BHS on arbitrary graphs by equipping agents with stronger capabilities and obtain optimal bounds on the number of agents required.

\subsection{Our contribution}
In this work, we establish the following six results:
\begin{enumerate}
    \item It is impossible for $3$ agents that are co-located at a safe node of the graph $G$ with $n$ nodes to solve the problem of 1-BHSV even if the agents have 1-hop visibility, global communication and infinite memory, and the nodes have infinite storage (refer to Theorem~\ref{thm:imp1}).
    \item It is impossible for $\delta_{BH}+1$ agents, initially placed at safe nodes of the graph, to solve the \textnormal{1-BHSV} problem, even if the agents have 1-hop visibility, global communication, infinite memory, and the nodes have infinite storage (refer to Theorem~\ref{thm:imp2}).

    \item It is impossible for $n-2$ agents, initially located at arbitrary safe nodes, to solve the BHSV problem in $1$-interval connected graphs, even if the agents have full visibility, global communication, and infinite memory, and the nodes have infinite storage (refer to Theorem \ref{thm:imp3}).
    \item We design an algorithm that solves 1-BHSV using four agents starting from a rooted configuration, where each agent has 1-hop visibility, f-2-f communication, and $O(\log n)$ memory, and each node has $O(\log n)$ storage (Theorem~\ref{thm:1-hop}).
    \item We design an algorithm that solves 1-BHSV using $\delta_{BH}+2$ agents starting from any configuration, where each agent has global communication, 0-hop visibility, and $O(\log n)$ memory, and each node has $O(\log n)$ storage (Theorem~\ref{thm:global}).
    \item We design an algorithm that solves BHSV in 1-interval connected graphs using $n-1$ agents starting from any configuration, where each agent has global communication, 1-hop visibility, and $O(\log n)$ memory, and each node has no storage (Theorem~\ref{thm:global-1-hop}).
\end{enumerate}

\section{Impossibility results}
In this section, we present the impossibility results based on the initial configuration of agents: (i) co-located agents and (ii) scattered agents.

\begin{figure}[h]
    \centering
    \includegraphics[width=0.5\linewidth]{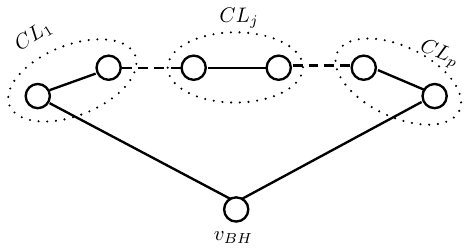}
    \caption{The construction of $G$.}
    \label{fig:imp1}
\end{figure}

\begin{figure}
    \centering
    \includegraphics[width=1\linewidth]{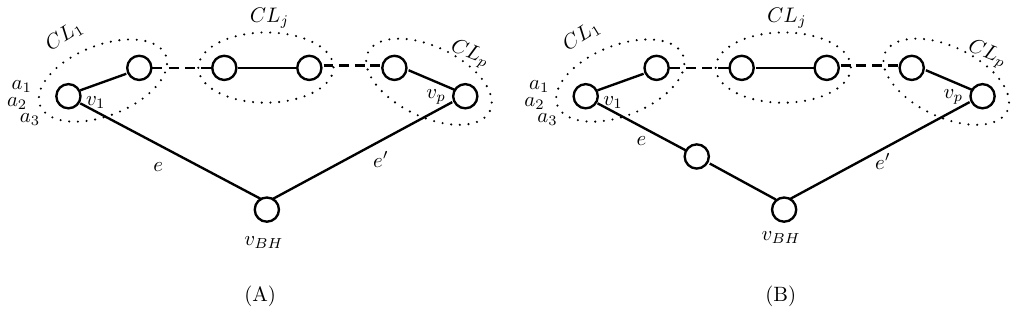}
    \caption{(A) Footprint graph $G$. (B) Modified footprint graph $G'$ if the agents declare, without traversing edge $e$, that edge $e$ in $G$ leads to $v_{BH}$.}
    \label{fig:imp2}
\end{figure}

In the following result, we show that three co-located agents cannot solve 1-BHSV. The idea is to construct a graph together with an adversarial strategy that prevents the agents from safely identifying the black hole without sacrificing agents. The graph consists of a sequence of small cliques connected in a path-like manner, with the black hole attached at both ends. The adversary dynamically removes edges so that whenever multiple agents are near a potential entry to the black hole, access to it is blocked. Consequently, the agents cannot verify whether an edge leads to the black hole without sending at least one agent to traverse it. However, once an agent attempts such a traversal and is destroyed, the adversary ensures that the remaining agents are separated and confined away from both possible entry points, preventing them from completing the verification. Hence, more than three agents are required to solve 1-BHSV under this model.

\begin{theorem}\label{thm:imp1}
($n\geq 9$) It is impossible for $3$ agents that are co-located at a safe node of the graph $G$ with $n$ nodes to solve the problem of 1-BHS even if the agents have 1-hop visibility, global communication and infinite memory, and the nodes have infinite storage.
\end{theorem}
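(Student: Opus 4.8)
My plan is to prove Theorem~\ref{thm:imp1} by an adversary argument carried out on the footprint $G$ shown in Figure~\ref{fig:imp1}. I would design $G$ so that: (a) the black hole $v_{BH}$ has degree exactly $3$; (b) there are (at least) two ``twin'' nodes, say $x$ and $y$, that are indistinguishable from the rooted starting node and that could each be the black hole; and (c) every route from the agents' starting region to a neighbour of one twin passes through the other twin (or through a single edge the adversary can blank), so that bringing a surviving, correctly-informed agent next to $v_{BH}$ inherently requires stepping onto a node that might itself be the black hole. Throughout, the adversary keeps every snapshot $\mathcal{G}_r$ connected while blanking at most one edge per round (so the TVG stays $1$-bounded $1$-interval connected), and it keeps the candidate pair $\{x,y\}$ consistent with everything the agents have observed so far; note that, although the agents have full visibility of $\mathcal{G}_r$, the adversary may keep every edge incident to a twin that is not needed for connectivity permanently absent, so the agents never learn those edges exist.

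The proof then splits on the behaviour of the (deterministic) $3$-agent algorithm. First I would show that as long as no agent has stepped onto $x$ or $y$, the adversary can keep both scenarios ``$v_{BH}=x$'' and ``$v_{BH}=y$'' alive: since the nodes are anonymous and the two twin-gadgets are symmetric in $G$, the adversary answers all visibility and communication queries symmetrically, so the agents' joint configuration is identical in the two scenarios; hence in this case 1-BHS is never solved, because \emph{identifying} a port to $v_{BH}$ would distinguish the scenarios. Therefore some agent must eventually step onto a twin, say onto $x$ at round $r^*$ from a node $u^*$; at that round the adversary commits to $v_{BH}=x$ and destroys that agent. Exactly two agents now survive, and although they learn $v_{BH}=x$ and the port from $u^*$ to $v_{BH}$, I would argue they cannot finish: (i) the edge from $u^*$ toward the rest of $G$ can be blanked forever, sealing $u^*$ ``behind'' $v_{BH}$, so no survivor can reach $u^*$; and (ii) reaching any other neighbour of $v_{BH}$ forces a survivor onto a node that, from its own viewpoint, is still a candidate black hole in the (still symmetric) remainder of $G$, so the adversary replays the same trap and destroys a second agent. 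The last survivor, being alone, can be kept from ever witnessing an edge incident to $v_{BH}$ while sitting at a neighbour of $v_{BH}$: whenever the survivor occupies a neighbour $v$, the adversary blanks the single edge $(v,v_{BH})$ and routes the (mandatory) connectivity of $v_{BH}$ through one of its other two neighbours, which is possible precisely because $\delta_{BH}=3$ and a lone agent occupies only one neighbour at a time. Combining the cases shows that no deterministic $3$-agent algorithm from the rooted configuration solves 1-BHS on $G$, even with full visibility, global communication, and unbounded memory and storage.

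The step I expect to be the main obstacle is part (ii) of the case analysis: showing that \emph{whatever} the three agents do — splitting up, regrouping, using one agent as a decoy to draw off the single edge removal, or exploiting the fact that two agents travelling on two internally disjoint routes can alternately advance past a blanked edge — one edge removal per round always suffices to keep a surviving \emph{and} correctly-informed agent away from every neighbour of $v_{BH}$. This is what forces the footprint of Figure~\ref{fig:imp1} to be built with care: the neighbourhood of each twin must be hidden behind that twin, the two twin-gadgets must be far enough apart that the agents cannot simultaneously threaten both, and $v_{BH}$ must have degree exactly $3$ so that after the one unavoidable loss the two remaining agents can never jointly occupy enough of $N(v_{BH})$ to force the adversary to expose an incident edge. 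The bound $n \ge 10$ is then exactly what is needed to embed two such twin-gadgets together with the connecting path structure; I would conclude by exhibiting the explicit graph and edge-removal schedule and verifying the invariants round by round.
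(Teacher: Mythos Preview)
Your plan diverges sharply from the paper's proof and leaves the decisive step unresolved. The paper uses no symmetry or twin argument at all, and takes $\delta_{BH}=2$, not $3$. Its footprint is a path of $p$ cliques $CL_1,\dots,CL_p$ (each of size $p$, so $n=p^2+1$) joined by single bridge edges $e_1,\dots,e_{p-1}$, with $v_{BH}$ attached only to $v_1\in CL_1$ via edge $e$ and to $v_p\in CL_p$ via edge $e'$; all three agents start together in $CL_1$. The adversary removes $e$ whenever at least two agents lie in $CL_1$ and removes $e'$ whenever at least two lie in $CL_p$, so an agent can reach $v_{BH}$ only when it is alone in an end clique. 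After the first death (say via $e$), the adversary deletes the bridge $e_1$ whenever a survivor sits at its $CL_2$ endpoint, so no survivor can ever re-enter $CL_1$; the remaining pair is pushed toward $CL_p$, where the same trick forces a second lone death via $e'$, and then blanking $e_1$ or $e_{p-1}$ as needed confines the last agent to $CL_2,\dots,CL_{p-1}$ forever. The bound $n\ge10$ is exactly $p\ge3$.

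The gap in your proposal is part~(ii). Once the first agent dies at $x$, the survivors know $v_{BH}=x$ with certainty (global communication plus full visibility), so your phrase ``a node that, from its own viewpoint, is still a candidate black hole'' no longer has content: there are no candidates left, and the two fully informed survivors will never voluntarily step onto $x$. From that point on the obstruction must be purely topological---one missing edge per round must keep \emph{both} survivors away from every neighbour of $x$---and you never supply a graph in which this holds. With $\delta_{BH}=3$ it is in fact hard to arrange: two coordinated, fully informed agents can typically approach two of the three neighbours simultaneously, and a single deletion cannot block both routes. The paper sidesteps this by taking $\delta_{BH}=2$ and making the only routes to $v_1$ and $v_p$ pass through single bridges the adversary can cut; the cliques exist precisely so that cutting a bridge still leaves every snapshot connected. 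Your symmetry machinery is thus neither needed to force the first death nor sufficient to handle what happens afterwards, and the concrete construction that does the real work is missing.
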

\begin{proof}
Let the size of the graph \( G \) be \( n \), and without loss of generality assume \( n - 1 = 2p \) for some integer \( p \geq 3 \). Construct \( G \) as follows. There are \( p \) cliques \( CL_1, CL_2, \ldots, CL_p \), each of size \( 2 \), and an additional node \( v_{BH} \) representing the black hole. Let \( v_1 \in CL_1 \) and \( v_p \in CL_p \), and add edges \( e = (v_{BH}, v_1) \) and \( e' = (v_{BH}, v_p) \). To interconnect the cliques, define connector nodes as follows. Let \( w_1^{(1)} \in CL_1 \) with \( w_1^{(1)} \neq v_1 \); for \( 2 \leq i \leq p-1 \), let \( w_1^{(i)}, w_2^{(i)} \in CL_i \); and let \( w_1^{(p)} \in CL_p \) with \( w_1^{(p)} \neq v_p \). Define edges \( e_1 = (w_1^{(1)}, w_1^{(2)}) \), \( e_i = (w_2^{(i)}, w_1^{(i+1)}) \) for \( 2 \leq i \leq p - 2 \), and \( e_{p-1} = (w_2^{(p-1)}, w_1^{(p)}) \). Refer to Figure~\ref{fig:imp1} for \( n = 7 \).

Assume that \( G \) is the footprint graph and that all three agents are initially co-located at node \( v_1 \). The adversary can remove at most one edge per round and follows the following strategy. If two or more agents are in \( CL_1 \), the adversary removes edge \( e \); if two or more agents are in \( CL_p \), it removes edge \( e' \). This ensures that whenever at least two agents are in \( CL_1 \), access to \( v_{BH} \) via \( e \) is blocked, and similarly, access via \( e' \) is blocked whenever at least two agents are in \( CL_p \).

Since the agents have $1$-hop visibility and global communication but no knowledge of \( G \) or $n$, they cannot determine that edges \( e \) and \( e' \) lead to node \( v_{BH} \) unless at least one agent traverses those edges. The reason is as follows. Suppose that, according to some algorithm \( \mathcal{A} \), the agents declare that edge \( e \) (resp.\ \( e' \)) from node \( v_1 \) (resp.\ \( v_p \)) leads to \( v_{BH} \) without traversing it. Consider a graph \( G' \) obtained from \( G \) by inserting a node between \( v_{BH} \) and \( v_1 \) (resp.\ \( v_p \)). If the agents were operating in \( G' \) instead of \( G \), their declaration would be incorrect. For instance, in Figure~\ref{fig:imp2}(A), if at some round the agents at node \( v_1 \) in \( G \) declare that edge \( e \) leads to \( v_{BH} \) without traversing it, then in the footprint \( G' \) shown in Figure~\ref{fig:imp2}(B), this declaration becomes incorrect. Hence, such a declaration does not solve the \(1\)-BHSV problem. Therefore, at least one agent must traverse the edge to verify whether \( e \) (resp.\ \( e' \)) leads to \( v_{BH} \).

Agents can access \( v_{BH} \) only in the following two cases:
\begin{itemize}
\item \textbf{Case 1:} Agent \( a_i \), where \( i \in \{1,2,3\} \), is in \( CL_1 \) while \( \{a_1,a_2,a_3\}\setminus\{a_i\} \) are in \( CL_j \) for \( 2 \leq j \leq p \).
\item \textbf{Case 2:} Agent \( a_i \), where \( i \in \{1,2,3\} \), is in \( CL_p \) while \( \{a_1,a_2,a_3\}\setminus\{a_i\} \) are in \( CL_j \) for \( 1 \leq j \leq p-1 \).
\end{itemize}

In Case 1, without loss of generality, suppose that \( a_1 \) moves to \( v_{BH} \) at round \( t \) via edge \( e \) and is destroyed. At some round \( t_1 \geq t \), if \( a_2 \) or \( a_3 \) is at node \( w_1^{(2)} \), the adversary deletes edge \( e_1 \), thereby preventing access to \( CL_1 \). Thus, the remaining agents cannot reach the neighbour of \( v_{BH} \) in \( CL_1 \). The only remaining possibility is to approach \( v_{BH} \) via edge \( e' \). However, if both agents move to \( CL_p \), the adversary deletes \( e' \), and they cannot determine whether edge \( e' \) in \( CL_p \) leads to \( v_{BH} \). Without moving to node \( v_{BH} \), if the agents declare that edge \( e' \) leads to \( v_{BH} \), the declaration may be incorrect. Indeed, consider the graph \( G' \) obtained from \( G \) by inserting a node between \( v_p \) and \( v_{BH} \); in \( G' \), the declaration would be false. Therefore, at least one agent must traverse edge \( e' \) to verify that it leads to \( v_{BH} \). The only possibility to access edge \( e' \) is when exactly one agent is in \( CL_p \). Without loss of generality, assume that \( a_2 \in CL_j \) for \( 2 \leq j \leq p-1 \) and \( a_3 \in CL_p \). If \( a_3 \) moves via edge \( e' \) at round \( t' \) and dies, then at any round \( t \geq t' \), if \( a_2 \) is at \( w_2^{(p-1)} \), the adversary deletes \( e_{p-1} \), and if \( a_2 \) is at \( w_1^{(2)} \), it deletes \( e_1 \). In this way, \( a_2 \) is confined within \( CL_2, \ldots, CL_{p-1} \) and can never reach a neighbour of \( v_{BH} \).

Note that the above argument holds only if there are at least three cliques (i.e., \( p \geq 3 \)). And we may need one node between nodes $v_p$ and $v_{BH}$ (or $v_1$ and $v_{BH}$). Therefore, $n\geq 9$.

The argument for Case 2 is analogous to Case 1. Hence, under this adversary strategy, no agent can reach a neighbour of \( v_{BH} \) at some round $t$, and identify the port leading to it at round $t$. The argument holds regardless of the agents' memory, the nodes' storage, or the availability of 1-hop visibility or global communication. This completes the proof.
\end{proof}

\begin{figure}[h]
    \centering
    \includegraphics[width=0.5\linewidth]{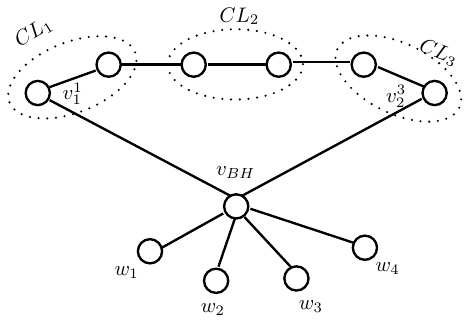}
    \caption{The footprint $G$ for $\delta_{BH}=6$.}
    \label{fig:imp3}
\end{figure}

Now, we show that it is impossible to solve 1-BHSV when $\delta_{BH}+1$ agents are scattered. The idea is to extend the impossibility from the case $\delta_{BH}=2$ to arbitrary $\delta_{BH}\geq 2$. For $\delta_{BH}=2$, the result follows directly from Theorem~\ref{thm:imp1}. For $\delta_{BH}\geq 3$, we attach additional nodes $w_1,\ldots,w_{\delta_{BH}-2}$ to the black hole, thereby increasing its degree while preserving the hard core structure of three cliques used in Theorem~\ref{thm:imp1}. Three agents are confined to this core structure, where, by the previous result, they cannot solve 1-BHSV. The remaining $\delta_{BH}-2$ agents are placed on the nodes $w_i$, each directly connected to the black hole. However, an agent at $w_i$ cannot safely identify whether the incident edge leads to the black hole without traversing it, since otherwise its decision could be incorrect in a modified graph where that edge is subdivided. Thus, each such agent either stays idle or risks destruction by entering the black hole. In either case, these agents do not help in solving the problem. Consequently, even $\delta_{BH}+1$ agents are insufficient to solve 1-BHSV.

\begin{theorem}\label{thm:imp2}
($n \geq 9, \delta_{BH}\geq 2$)
It is impossible for $\delta_{BH}+1$ agents, initially placed at safe nodes of the graph, to solve the \textnormal{1-BHSV} problem, even if the agents have 1-hop visibility, global communication, infinite memory, and the nodes have infinite storage. 
\end{theorem}

\begin{proof}
Let $G$ be a graph on $n$ nodes and let $\delta_{BH}$ be the degree of the black hole node $v_{BH}$.

\begin{itemize}
    \item \textbf{Case 1 $\bm{(\delta_{BH}=2)}$.} 
In the construction of $G$ in Theorem~\ref{thm:imp1}, the value of $\delta_{BH}$ is $2$. Therefore, by Theorem~\ref{thm:imp1}, $\delta_{BH}+1=3$ agents cannot solve \textnormal{1-BHSV}.

\item \textbf{Case 2 $\bm{(\delta_{BH}\geq 3)}$.}  
The construction of $G$ is as follows. Let $w_1, \ldots, w_{\delta_{BH}-2}$ be nodes in $G$. Let $CL_1$, $CL_2$, and $CL_3$ be three cliques of size $2$. Let $CL_i$ contain nodes $v_1^i$ and $v_2^i$ for $i \in [1,3]$. Add edges $(v_1^1, v_{BH})$, $(v_2^1, v_1^2)$, $(v_2^2, v_1^3)$, and $(v_2^3, v_{BH})$. Additionally, add edges $(w_i, v_{BH})$ for all $i \in [1, \delta_{BH}-2]$. It is easy to see that the footprint extends the one we considered in Theorem \ref{thm:imp1}. For $\delta_{BH}=6$, refer to Figure \ref{fig:imp3}. One agent is present at node $w_i$ for every $i\in [1, \delta_{BH}-2]$, and three agents are present at node $v_1^1$. Due to Theorem \ref{thm:imp1}, three agents at node $v_1^1$ cannot solve 1-BHSV. Therefore, only the argument remains for the agent at node $w_i$. If an agent $a_i$ at node $w_i$ without moving through edge $e=(w_i, v_{BH})$ in $G$. Consider a graph \( G' \) obtained from \( G \) by inserting a node between \( v_{BH} \) and \( w_i \). If the agents were operating in \( G' \) instead of \( G \), their declaration would be incorrect. Therefore, agent $a_i$ at node $w_i$ either remains there forever or enters $v_{BH}$ and is destroyed, after which the corresponding node becomes inaccessible to other alive agents. Therefore, $\delta_{BH}+1$ agents cannot solve 1-BHSV in this case as well.
\end{itemize}

Hence, under this adversary strategy, no agent can reach a neighbour of \( v_{BH} \) at round $t$, and identify the port leading to it at round $t$. The argument holds regardless of the agents' memory, the nodes' storage, or the availability of $1$-hop visibility or global communication. The proof is valid for $n \geq 9$, since the minimal case corresponds to $\delta_{BH}=2$, together with Theorem~\ref{thm:imp1}. This completes the proof.
\end{proof}

We now show an impossibility result for 1-interval connected graphs, where the adversary can delete arbitrarily many edges while maintaining $\mathcal{G}_r$ connected. In particular, we prove that $n-2$ agents cannot solve 1-BHSV in this setting. The key idea is that with $n-2$ agents and $n-1$ safe nodes, at least one safe node is always unoccupied. The adversary exploits this by ensuring that, at every round, the only edge connecting the black hole to the rest of the graph is incident to this unoccupied node. Since no agent is present there, no agent can traverse an edge leading to the black hole. Moreover, agents cannot safely declare any edge as leading to the black hole without traversing it, as such a declaration could be incorrect in a slightly modified graph. Thus, the adversary simultaneously blocks exploration and verification. Consequently, no agent ever reaches the black hole or identifies a port leading to it, even though the graph remains connected at every round. Hence, $n-2$ agents are insufficient to solve 1-BHSV.

\begin{theorem}\label{thm:imp3}
It is impossible for $n-2$ agents, initially located at arbitrary safe nodes, to solve the BHSV problem in $1$-interval connected graphs, even if the agents have full visibility, global communication, and infinite memory, and the nodes have infinite storage.
\end{theorem}

\begin{proof}
Let $v_1, v_2, \ldots, v_{n-1}$ be safe nodes and let $v_{BH}$ be the black hole. Let $G$ be the wheel graph $W_n$ with center $v_{BH}$, and let $e_i=(v_i,v_{BH})$ for $1 \le i \le n-1$. Observe that to correctly conclude that an edge $e_i$ leads to the black hole, at least one agent must traverse $e_i$. Otherwise, consider an alternative configuration in which $v_{BH}$ is not a black hole; in such a case, declaring $e_i$ as a black hole edge without traversal would yield incorrect information. Hence, any correct algorithm must ensure that some agent traverses $e_i$.

Now consider any execution with $n-2$ agents. Since there are $n-1$ safe nodes, at every round, at least one safe node contains no agent; assume without loss of generality that this node is $v_1$. At the beginning of each round $r \ge 0$, the adversary constructs $\mathcal{G}_r = (V, E \setminus \bigcup_{i=2}^{n-1} \{e_i\})$. In this graph, the only edge incident to $v_{BH}$ is $e_1$, and since no agent is located at $v_1$, no agent can traverse an edge incident to $v_{BH}$ in any round. Consequently, no agent ever reaches $v_{BH}$ or can determine whether it is a black hole. Also, since $\mathcal{G}_r$ is connected at every round $r$, it maintains 1-interval connectivity. Therefore, the BHSV problem cannot be solved using $n-2$ agents in 1-interval connected graphs, regardless of agent memory, visibility, communication, or node storage. This completes the proof.
\end{proof}

\begin{algorithm}
\caption{Depth-First Search by an agent $a_i$}\label{algo:DFS}
\If{$a_i.state=explore$}
{
    \If{the current node $v$ is already visited by $a_i$}
    {
        set $a_i.prt\_out=a_i.prt\_in$ and move through $a_i.prt\_out$\\
    }
    \Else
    {
        mark the current node $v$ as visited node\\
        set $a_i.prt\_out=(a_i.prt\_in+1)\mod deg(v)$\\
        \If{$a_i.prt\_out=$ value of port used to enter into $v$ for the first time}
        {
            set $a_i.state=backtrack$ and move through $a_i.prt\_out$\\
        }
        \Else
        {
            move through $a_i.prt\_out$\\
        }
    }
}
\ElseIf{$a_i.state=backtrack$}
{
    set $a_i.prt\_out=(a_i.prt\_in+1)\mod deg(v)$\\
        \If{$a_i.prt\_out=$ value of port used to enter into $v$ for the first time}
        {
            set $a_i.state=backtrack$ and move through $a_i.prt\_out$\\
        }
        \Else
        {
            set $a_i.state=explore$ and move through $a_i.prt\_out$\\
        }
}
\end{algorithm}

\section{Algorithm using 1-hop visibility}\label{sec:onehop}
In this section, we present an algorithm for solving the $1$-BHSV problem using four agents that are initially co-located at a single safe node, have $1$-hop visibility and f-2-f communication. Recall that since agents are equipped with 1-hop visibility, they can understand whether the edge corresponding to port $p$ is present.

In~\cite{dyn_disp}, the authors present an exploration algorithm for solving the dispersion problem using two agents in $1$-bounded $1$-interval connected graphs, based on a depth-first search (DFS) traversal by mobile agents. For completeness, we first describe a high-level overview of DFS in this setting. The DFS procedure operates in two states: \textit{explore} and \textit{backtrack}. Each agent $a_i$ maintains the following parameters: $a_i.ID$, which stores its identifier; $a_i.state$, which indicates whether the agent is in the \textit{explore} or \textit{backtrack} state; $a_i.prt\_in$, the port through which the agent enters the current node; and $a_i.prt\_out$, the port through which it exits. Initially, each agent is in the \textit{explore} state, and a node is called the \textit{root} if the agents start their DFS from that node. At each node $v$, the agent updates its state and movement according to Algorithm~\ref{algo:DFS}. It is known that any static graph $G$ with $m$ edges can be explored within $4m$ rounds using DFS~\cite{Das_2018}. Now, based on DFS, we recall the high-level idea of exploration using two agents.

\medskip 
\noindent \textbf{Exploration using two agents:} Consider two agents initially co-located at a node $v$. In \cite{dyn_disp}, authors employ a DFS-based strategy that avoids waiting indefinitely for missing edges by assigning distinct roles to the agents. In particular, one agent, say $a_1$, acts as the leader and strictly follows its original DFS path, while the other agent, say $a_2$, acts as the non-leader. Both agents initiate their DFS from node $v$. At any round, let agent $a_1$ intend to traverse port $p_1$ from node $v_1$ ($a_1.prt\_out=p_1$), and agent $a_2$ intend to traverse port $p_2$ from node $v_2$ ($a_2.prt\_out=p_2$). Note that it is possible that $v_1 = v_2$ and $p_1 = p_2$. The action of each agent depends on $a_i.state$.

\begin{itemize}
    \item If $a_1.state$ is $explore$ (resp. $backtrack$) and the edge corresponding to port $p_1$ is present, then $a_1$ traverses port $p_1$; otherwise, it waits until the edge reappears.
    \item If $a_2.state$ is $explore$ (resp. $backtrack$) and the edge corresponding to port $p_2$ is present, then $a_2$ traverses port $p_2$.
    \item If $a_2.state$ is $explore$ and the edge corresponding to port $p_2$ is missing, then $a_2$ updates its outgoing port as $a_2.prt\_out = (p_2+1)\bmod \deg(v_2)$. If $a_2.prt\_out \neq a_2.prt\_in$, it continues in the $explore$ state and moves through the new port; otherwise, it sets $a_2.state= backtrack$ and moves through the new port.
    \item If the state of $a_2$ is $backtrack$ and the edge corresponding to port $p_2$ is missing, then $a_2$ restarts its DFS from node $v_2$.

    \item If agent $a_2$ is at the root of its current DFS tree and no unexplored ports remain, then it restarts its DFS.
\end{itemize}

We assume that each node is equipped with a whiteboard that stores DFS-related information. Since agent $a_2$ may restart its DFS multiple times, agent $a_2$ and the nodes maintain a counter indicating the number of DFS restarts by agent $a_2$, which allows distinguishing between outdated and current information. This yields a perpetual exploration algorithm, which we denote by \textsc{Exp\_Algo}.

\medskip
We now define the parameters required for the formal description of our algorithm. These parameters are maintained on the whiteboard by the groups and are used to coordinate their actions during the execution. Let $G_1$ and $G_2$ be two groups. Each group contains one leader and one helper.

\begin{itemize}
    \item $\bm{wb_v(G_1).(parent)}$: This parameter stores the information regarding node $v$ w.r.t. the DFS traversal of the leader of $G_1$. The variable $parent$ stores the port used by the leader of the group $G_1$ to visit node $v$ for the first time. At the root of the DFS, $wb_v(G_1).(parent)=-1$.
    \item $\bm{wb_v(G_2).(parent, dfs\_label)}$: It stores the information regarding node $v$ w.r.t. the DFS traversal of the leader of $G_2$. The variable $parent$ stores the port used by the leader of group $G_2$ to visit node $v$ for the first time. The variable $dfs\_label$ stores the number of DFS being run by $G_2$. At the root of the current DFS, $wb_v(G_2).(parent, dfs\_label)=(-1,1)$.
    \item $\bm{Presence\_BH}$: A Boolean parameter indicating whether a black hole is present in the network. If $Presence\_BH=1$, then no black hole exists in the network; otherwise, its value is $0$. Initially, $Presence\_BH=0$.
\end{itemize}

Note that the group $G_1$ does not need to maintain $dfs\_label$ as it never restarts its DFS. It runs only a single DFS, and the leader of $G_1$ always stays on its original path of its DFS. The parameters maintained by each agent $a_i$ are as follows:
\begin{itemize}
    \item $a_i.ID$: It denotes the ID of agent $a_i$.
    \item $a_i.state$: It denotes the state of $a_i$ in its current DFS. Agent $a_i$ can have either an $explore$ or a $backtrack$ state. It is initially set to $explore$.
    \item $a_i.prt\_in$: It stores the port through which an agent $a_i$ enters into the current node. It is initially set to $-1$.  
    \item $a_i.prt\_out$: It stores the port through which an agent $a_i$ exits from the current node. It is initially set to $-1$.
    \item $a_i.dfs\_label:$ This parameter stores the number of times $G_2$ has restarted the DFS. Initially, its value is $1$.  
    \item $a_i.L_{G_1}$: A Boolean parameter indicating whether agent $a_i$ is the leader of group $G_1$. Specifically, $a_i.L_{G_1}=1$ if $a_i$ is the leader of $G_1$, and $a_i.L_{G_1}=0$ otherwise. Initially, $a_i.L_{G_1}=0$.
    \item $a_i.N$: An integer parameter that stores the number of nodes visited by agent $a_i$ during the DFS. Initially, $a_i.N = 0$.
    \item $a_i.H_{G_1}$: A Boolean parameter indicating whether agent $a_i$ is the helper of group $G_1$. Specifically, $a_i.H_{G_1}=1$ if $a_i$ is the helper of $G_1$, and $a_i.H_{G_1}=0$ otherwise. Initially, $a_i.H_{G_1}=0$.
    \item $a_i.L_{G_2}$: A Boolean parameter indicating whether agent $a_i$ is the leader of group $G_2$. Specifically, $a_i.L_{G_2}=1$ if $a_i$ is the leader of $G_2$, and $a_i.L_{G_2}=0$ otherwise. Initially, $a_i.L_{G_2}=0$.
    \item $a_i.H_{G_2}$: A Boolean parameter indicating whether agent $a_i$ is the helper of group $G_2$. Specifically, $a_i.H_{G_2}=1$ if $a_i$ is the helper of $G_2$, and $a_i.H_{G_2}=0$ otherwise. Initially, $a_i.H_{G_2}=0$.
   \item $a_i.(prt, L_{G_1}, r)$: Let $a_i.L_{G_1} = 1$ and agent $a_i$ computes $a_i.prt\_out$. If the edge corresponding to port $a_i.prt\_out$ is missing, then the agent stores the tuple $(prt,1,r)$, where $prt = a_i.prt\_out$ and $r$ denotes the number of consecutive rounds for which the edge corresponding to port $prt$ has been missing. Otherwise, $a_i.(prt, L_{G_1}, r) = (\bot, 0, -1)$.

\item $a_i.(ID_{L_{G_1}}, prt, r')$: If $a_i.L_{G_2}=1$, and it finds that $G_1$ is stuck due to missing edge, then this parameter stores the tuple $(ID_{L_{G_1}}, prt, r)$, where $ID_{L_{G_1}}$ is the identifier of the leader of $G_1$, $prt$ is the port through which $L_{G_1}$ intends to move, and $r'$ denotes the number of consecutive rounds for which the edge corresponding to port $prt$ has been missing. Otherwise, $a_i.(ID_{L_{G_1}}, prt, r') = (\bot,\bot,-1)$.
\end{itemize}

With a slight abuse of notation, we use $L_{G_1}$ (resp.\ $L_{G_2}$) to denote the leader of group $G_1$ (resp.\ $G_2$). Similarly, $H_{G_1}$ (resp.\ $H_{G_2}$) denotes the helper of group $G_1$ (resp.\ $G_2$). Whenever we say that an agent changes its role, it updates its corresponding parameters accordingly. For instance, if an agent $a_i$ acting as $L_{G_2}$ becomes $H_{G_1}$, then it updates its parameters as follows: $a_i.H_{G_1}=1$ and $a_i.L_{G_1}=a_i.L_{G_2}=a_i.H_{G_2}=0$.

\medskip
We first modify \textsc{Exp\_Algo} to ensure that at least one agent can detect that the exploration is complete.

\medskip
\noindent \textbf{Modified exploration strategy:} Let $a_1$ and $a_2$ be two agents initially co-located at a node. Agent $a_1$ acts as the leader of $G_1$ and sets $a_1.L_{G_1}=1$ and $a_1.N=1$, while agent $a_2$ acts as the leader of $G_2$ and sets $a_2.L_{G_2}=1$ and $a_2.N=1$. The algorithm executed by each agent is described as follows. Whenever agent $a_i$ visits a node in $explore$ state, and there is no information on white board based on its current DFS, it appends the value of $a_i.N$ by 1, and whenever it restarts its DFS, then it sets the value of $a_i.N=1$.  

\medskip
\noindent \underline{Steps of $a_1$ (Leader of $G_1$):} Agent $a_1$ follows \textsc{Exp\_Algo} with a slight modification. Let $p$ be the port computed at round $t$ according to \textsc{Exp\_Algo}. If the edge corresponding to port $p$ is present, then $a_1$ sets $a_1.(prt,L_{G_1},r)=(p,1,0)$. Otherwise, if the edge is missing, then $a_1$ sets $a_1.(prt,L_{G_1},r)=(p,1,1)$, and subsequently increases the value of $r$ by $1$ in every round as long as the edge corresponding to $p$ remains missing. When the edge corresponding to $p$ reappears in some round, $a_1$ resets $a_1.(prt,L_{G_1},r)=(\bot,0,-1)$ and traverses port $p$. If $a_1$ is at the root node, and there is no node to explore, it can conclude that it has explored $G$.

\medskip
\noindent \underline{Steps of $a_2$ (Leader of $G_2$):} Agent $a_2$ also follows \textsc{Exp\_Algo} with suitable modifications. When $a_2$ meets $a_1$ at a node $w$ and observes that the edge corresponding to $a_1.prt\_out$ is missing and $a_2.(ID_{L_{G_1}},prt,r')=(\bot,\,\bot,\,-1)$, it restarts its DFS from $w$, treating $w$ as the root of its current DFS tree, and sets $a_2.(ID_{L_{G_1}},prt,r')=(a_1.ID,\,a_1.prt\_out,\,r)$ and $a_2.N=1$, where $r$ is obtained from $a_1.(prt,L_{G_1},r)$. While continuing the DFS from $w$, the following cases can arise.

\begin{itemize}
    \item If $a_2$ encounters $a_1$ at a node $w' \neq w$, it distinguishes this using the whiteboard information, since $w$ is the root of its current DFS tree. If it observes that the edge corresponding to $a_1.prt\_out$ is missing at $w'$, then it restarts its DFS by treating $w'$ as the root of a new DFS tree, and sets $a_2.(ID_{L_{G_1}},prt,r')=(a_1.ID,\,a_1.prt\_out,\,r)$ and $a_2.N=1$, where $r$ is obtained from $a_1.(prt,L_{G_1},r)$.

    \item If $a_2$ is blocked in the backtrack state at a node $w''$ and agent $a_1$ is not present, then it restarts its DFS by treating $w''$ as the root of a new DFS tree, and sets $a_2.(ID_{L_{G_1}},prt,r')=(\bot,\,\bot,\,-1)$ and $a_2.N=1$.

    \item If $a_2$ meets $a_1$ again at the root node $w$ and observes that $r=r'$, then it infers that $a_1$ remains blocked at the same port. Moreover, if all ports at node $w$, except the one corresponding to $a_1$, have been explored, and $r=r'$, then $a_2$ concludes that the exploration is complete.

    \item If $a_2$ meets $a_1$ again at the root node $w$ and observes that $r \neq r'$ and the edge corresponding to $a_1.prt\_out$ is missing, then it restarts its DFS by treating $w$ as the root of a new DFS tree, and sets $a_2.(ID_{L_{G_1}},prt,r')=(a_1.ID,\,a_1.prt\_out,\,r)$ and $a_2.N=1$, where $r$ is obtained from $a_1.(prt,L_{G_1},r)$.
    \item If $a_2$ does not meet $a_1$ at the root node $w$, then it continues its ongoing DFS but sets $a_2.(ID_{L_{G_1}},prt,r')=(\bot,\,\bot,\,-1)$. Moreover, if all ports at node $w$, except the one corresponding to $a_1$, have been explored, then $a_2$ restarts its DFS by treating $w$ as the root of a new DFS tree, and sets $a_2.(ID_{L_{G_1}},prt,r')=(\bot,\,\bot,\,-1)$ and $a_2.N=1$.
\end{itemize}

We denote this algorithm as \textsc{Modified\_Exp\_Algo}. In the analysis, we show that, using \textsc{Modified\_Exp\_Algo}, if group $G_1$ remains blocked on an edge $e$ for $T$ rounds, then agent $a_2$ explores the graph $G \setminus \{e\}$ and correctly concludes that the exploration is complete. This modification transforms the perpetual exploration algorithm of \cite{dyn_disp} into an exploration algorithm in which at least one of the two agents is aware that exploration has been completed. We now present the high-level idea of how this approach can be extended to solve the $1$-BHSV problem using four co-located agents.

\begin{figure}
    \centering
    \includegraphics[width=0.9\linewidth]{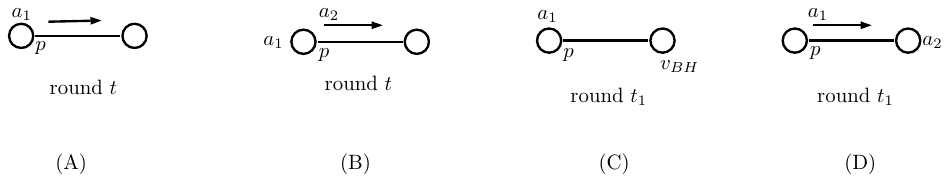}
    \caption{(A) Round $t$ executed by agent $a_1$ according to \textsc{Modified\_Exp\_Algo}. (B)--(D) illustrate how $G_1$ simulates round $t$ of \textsc{Modified\_Exp\_Algo} at round $t$.}
    \label{fig:simulation}
\end{figure}

\medskip
\noindent \textbf{High-level idea to solve 1-BHSV:} Let $a_1, a_2, a_3,$ and $a_4$ be four agents initially located at a safe node of the graph. The agents are partitioned into two groups: $G_1=\{a_1,a_2\}$ and $G_2=\{a_3,a_4\}$. In $G_1$, agent $a_1$ acts as the leader and $a_2$ as the helper, while in $G_2$, agent $a_3$ is the leader and $a_4$ is the helper. We refer to $G_1$ as the leader group and $G_2$ as the non-leader group. Both groups execute \textsc{Modified\_Exp\_Algo}. Since the graph may contain a black hole, agents do not traverse edges blindly and instead employ cautious movement. Suppose that at round $t$, group $G_1$ intends to traverse port $p$ from node $u$ to node $v$ according to \textsc{Modified\_Exp\_Algo}$.$ Refer to Figure~\ref{fig:simulation}(A), where round $t$ is executed by agent $a_1$ when there is no black hole. Now, we show how this movement is simulated by agents $a_1$ and $a_2$ at round $t$.

The helper $a_2$ traverses port $p$ in round $t$, while the leader $a_1$ remains at node $u$ (refer to Figure~\ref{fig:simulation}(B)). At some round $t_1 \geq t+1$, if $a_1$ observes that the edge corresponding to port $p$ is present but $a_2$ is not at node $v$, then $a_1$ concludes that port $p$ leads to the black hole (refer to Figure~\ref{fig:simulation}(C)); otherwise, if $a_1$ observes that the edge is present and $a_2$ is at node $v$, then $a_1$ traverses port $p$ and reaches node $v$ (refer to Figure~\ref{fig:simulation}(D)). Thus, the round $t$ executed by $a_1$ in \textsc{Modified\_Exp\_Algo} is simulated by the group $G_1=\{a_1,a_2\}$. A symmetric cautious movement is performed by the agents of $G_2$. Throughout the execution, both groups update their DFS information on the whiteboard. A careful handling is required in certain scenarios. For instance, suppose $G_1$ is at node $u$ and $G_2$ is at node $v$, and both helpers attempt cautious movement by swapping positions, i.e., $a_2$ moves toward $v$ and $a_4$ moves toward $u$. If in the next round the edge $(u,v)$ is missing, then the leaders cannot reach their helper, and their roles must be reassigned appropriately. 

\medskip
In the next section, we formally explain the technical details of the algorithm.

\subsection{Algorithm}
We divide the algorithm into two phases. The details of each phase is as follows.

\medskip
\noindent\textbf{Phase 1:} Consider four agents $a_1, a_2, a_3,$ and $a_4$ initially co-located at a safe node $v$ at round $r=0$. At round $0$, the roles are initialized as follows: agent $a_1$ sets $a_1.L_{G_1}=1$, agent $a_2$ sets $a_2.H_{G_1}=1$, agent $a_3$ sets $a_3.L_{G_2}=1$, and agent $a_4$ sets $a_4.H_{G_2}=1$. The leaders of both groups initialize the whiteboard at node $v$ by writing $wb_v(G_1).(parent)=-1$ and $wb_v(G_2).(parent,dfs\_label)=(-1,1)$. Both groups then proceed with  \textsc{Modified\_Exp\_Algo}, executing moves cautiously. By cautious movement, we mean that the helper first traverses the port computed by the DFS procedure of its respective group. If this movement is successful, that is, the corresponding edge is present, and the helper is alive at the adjacent node, then the leader traverses the same port. In this way, each group simulates the moves of \textsc{Modified\_Exp\_Algo} using cautious movement. As discussed in the previous section, several cases arise during the execution that require careful handling. In the following, we present all relevant scenarios. Note that whenever agents change their roles, they exchange all the parameters they currently maintain. For instance, if $L_{G_2}$ becomes $H_{G_1}$, then the agent assuming the role of $H_{G_1}$ obtains all the information previously maintained by $H_{G_1}$, and vice versa.

\medskip
\begin{figure}
    \centering
    \includegraphics[width=0.75\linewidth]{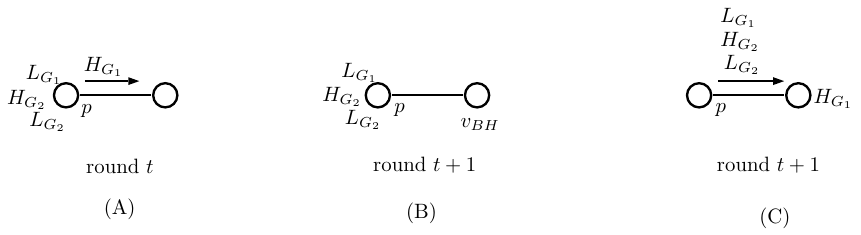}
    \caption{Case (i): $G_1$ and $G_2$ are at the same node at round $t$ and choose the same outgoing port.}
    \label{fig:case1112}
\end{figure}

\begin{figure}
    \centering
    \includegraphics[width=0.45\linewidth]{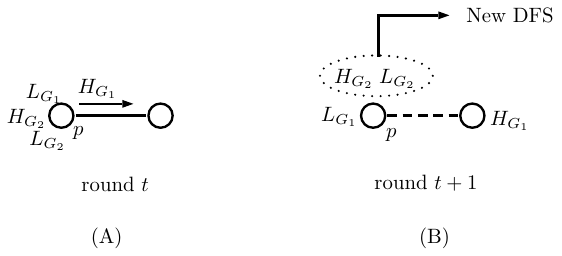}
    \caption{Case (i): $G_1$ and $G_2$ are at the same node at round $t$ and choose the same outgoing port; however, at round $t+1$, the edge corresponding to port $p$ is missing.}
    \label{fig:case111}
\end{figure}

\noindent \textbf{Case (i) (both $\bm{L_{G_1}}$ and $\bm{H_{G_1}}$ are present at a node $\bm{v}$ and have $\bm{state=explore}$)}:
Let the computed $prt\_out$ value by $G_1$ be $p$ at round $t$. Now there are two cases: (i) the edge corresponding to port $p$ is present, or (ii) it is not present. If the edge corresponding to port $p$ is present at round $t$, it does the following. 

\begin{itemize}
    \item[A1.] If $G_1$ finds both $L_{G_2}$ and $H_{G_2}$ at node $v$, and the outgoing port is also $p$, then only $G_1$ performs cautious movement, while $G_2$ waits at node $v$. At round $t$, $H_{G_1}$ moves through port $p$ (refer to Figure \ref{fig:case1112}(A)). If, in round $t+1$, the edge corresponding to port $p$ is present but $H_{G_1}$ is not alive (refer to Figure \ref{fig:case1112}(B)), then all agents at node $v$ conclude that port $p$ leads to the black hole. In round $t+1$, if the edge corresponding to port $p$ is present and $H_{G_1}$ is alive (refer to Figure \ref{fig:case1112}(C)), then all agents at node $v$, including those in $G_2$, traverse port $p$. 

    \medskip
    
    At round $t+1$, if the edge corresponding to port $p$ is not present, then the agents in $G_2$ infer that $G_1$ is blocked due to a missing edge and update their movement according to \textsc{Modified\_Exp\_Algo} (refer to Figure \ref{fig:case111}(A)). Consequently, $L_{G_1}$ updates $(prt, L_{G_1}, r)$ and $L_{G_2}$ updates $(ID_{L_{G_1}}, prt, r')$. As per \textsc{Modified\_Exp\_Algo}, $G_2$ starts a new DFS from node $v$, and $H_{G_1}$ waits for the edge to reappear (refer to Figure \ref{fig:case111}(B)).

\begin{figure}
    \centering
    \includegraphics[width=0.85\linewidth]{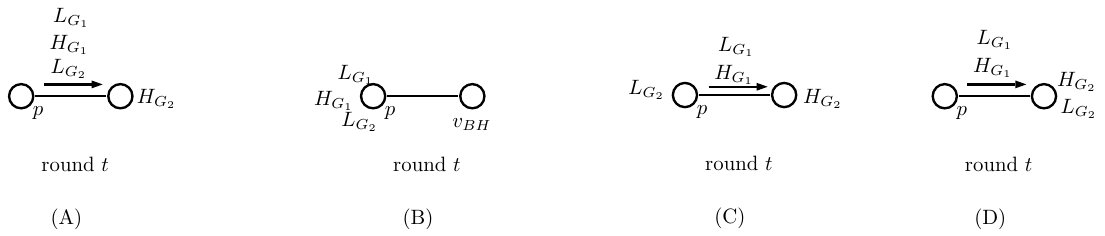}
    \caption{(A) $L_{G_1}$, $H_{G_1}$, $L_{G_2}$ move via port $p$, when they observe $H_{G_2}$ is alive, (B) $L_{G_1}$, $H_{G_1}$, $L_{G_2}$ declare port $p$ leads to $v_{BH}$, when they observe $H_{G_2}$ is not alive, (C) $L_{G_1}$, $H_{G_1}$ move via port $p$ when $L_{G_1}$, $H_{G_1}$, $H_{G_2}$ is co-located, and (D) $L_{G_1}$, $H_{G_1}$ move via port $p$ when $L_{G_1}$, $H_{G_1}$ finds that $L_{G_2}$ and $H_{G_2}$ are present at the node reachable from $v$ via port $p$. }
    \label{fig:case12}
\end{figure}
    \item[A2.] If $G_1$ finds only $L_{G_2}$ at node $v$, and its $prt\_out$ is $p$, then it checks, using $1$-hop visibility, whether $H_{G_2}$ is present at the node corresponding to port $p$. If so, $G_1$ does not perform cautious movement, and $L_{G_1}$ and $H_{G_1}$ move through port $p$ (refer to Figure \ref{fig:case12}(A)). The reason for not performing a cautious movement is that $G_2$ has already executed the cautious movement corresponding to port $p$. Otherwise (refer to Figure \ref{fig:case12}(B)), it concludes that port $p$ leads to the black hole.

    \item[A3.]  If only $H_{G_2}$, is present at the current node and its $prt\_in$ is the same as the $p$ of $G_1$. In this case, $L_{G_1}$ and $H_{G_1}$ move through port $p$ (refer to Figure \ref{fig:case12}(C)). In this case, no additional verification is required, since the cautious movement is initiated by the helper of $G_2$ from a safe node. Therefore, port $p$ leads to a safe node as $H_{G_2}$ is with $G_1$.

    \item[A4.] If $L_{G_2}$ and $H_{G_2}$ are present at the node reachable from $v$ via port $p$ (refer to Figure \ref{fig:case12}(D)), then $L_{G_1}$ and $L_{G_2}$ directly move through port $p$ without performing any cautious movement. 
\end{itemize}

If the edge corresponding to port $p$ is not present at round $t$, it does the following.

\begin{figure}[ht]
    \centering
    \includegraphics[width=0.7\linewidth]{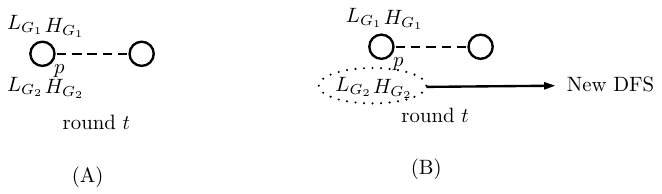}
    \caption{(A) $L_{G_1}$, $H_{G_1}$, $L_{G_2}$, and $H_{G_2}$ are co-located, and the edge corresponding to port $p$ is missing. (B) $G_2$ restarts its DFS, while $G_1$ waits for the edge corresponding to port $p$ to reappear.}
    \label{fig:case13}
\end{figure}

\begin{figure}[ht]
    \centering
    \includegraphics[width=0.7\linewidth]{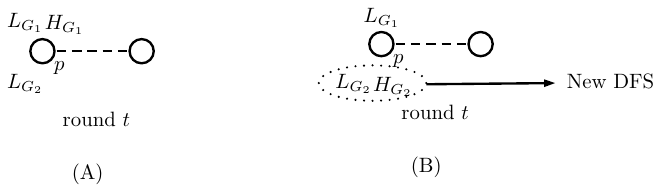}
    \caption{(A) $L_{G_1}$, $H_{G_1}$, and $L_{G_2}$ are co-located, and the edge corresponding to port $p$ is missing. (B) $H_{G_1}$ becomes $L_{G_2}$, $G_2$ restarts its DFS, and $L_{G_1}$ waits for the edge corresponding to port $p$ to reappear.}
    \label{fig:case14}
\end{figure}

\begin{figure}[ht]
    \centering
    \includegraphics[width=0.7\linewidth]{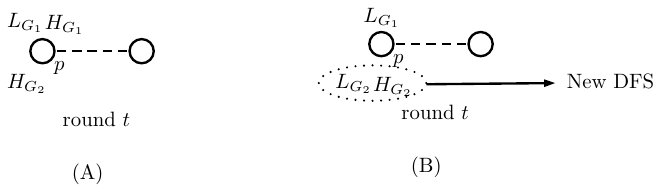}
    \caption{(A) $L_{G_1}$, $H_{G_1}$, and $H_{G_2}$ are co-located, and the edge corresponding to port $p$ is missing. (B) $H_{G_1}$ becomes $L_{G_2}$, $G_2$ restarts its DFS, and $L_{G_1}$ waits for the edge corresponding to port $p$ to reappear.}
    \label{fig:case15}
\end{figure}
\begin{itemize}
    \item[B1.] If $L_{G_1}$ and $H_{G_1}$ finds both $L_{G_2}$ and $H_{G_2}$ at node $v$, and the outgoing port of $G_2$ is also $p$, then only $G_1$ performs cautious movement (refer to Figure \ref{fig:case13} (A)). In this case, the agents in $G_2$ infer that $G_1$ is blocked due to a missing edge and starts a new DFS according as per  \textsc{Modified\_Exp\_Algo} (refer to Figure \ref{fig:case13} (B)). Consequently, $L_{G_1}$ updates $(prt, L_{G_1}, r)$ and $L_{G_2}$ updates $(ID_{L_{G_1}}, prt, r')$, and both groups continue their execution according to \textsc{Modified\_Exp\_Algo}.

\item[B2.] If only $L_{G_2}$ is present at the current node $v$ and its $prt\_out$ is the same as that of $G_1$ (refer to Figure \ref{fig:case14} (A)), then $L_{G_1}$ assigns its helper role to the helper of $G_2$ and infers that the corresponding port has already been explored. In particular, $H_{G_1}$ becomes the helper of $G_2$. In this case, the agents in $G_2$ again infer that $G_1$ is blocked due to a missing edge and starts a new DFS as per \textsc{Modified\_Exp\_Algo} (refer to Figure \ref{fig:case14} (B)). Consequently, $L_{G_1}$ updates $(prt, L_{G_1}, r)$ and $L_{G_2}$ updates $(ID_{L_{G_1}}, prt, r')$, and both groups continue their execution according to \textsc{Modified\_Exp\_Algo}. 

\medskip
\noindent \underline{Note:} In the subsequent round, if the edge corresponding to port $p$ is present and $L_{G_1}$ observes that $H_{G_2}$ is alive at the adjacent node, then $L_{G_1}$ traverses port $p$ and communicates this information. Otherwise, the updated group $G_2$ reaches $H_{G_2}$ and propagates the information accordingly.

    \item[B3.] If only $H_{G_2}$ is present at the current node $v$ and its $prt\_in$ is the same as that of $G_1$ (refer to Figure \ref{fig:case15} (A)), then $H_{G_1}$ becomes $L_{G_2}$. In this case, the agents in $G_2$ again infer that $G_1$ is blocked due to a missing edge and start a new DFS as per \textsc{Modified\_Exp\_Algo} (refer to Figure \ref{fig:case15} (B)). Consequently, $L_{G_1}$ updates $(prt, L_{G_1}, r)$ and $L_{G_2}$ updates $(ID_{L_{G_1}}, prt, r')$, and both groups continue their execution according to \textsc{Modified\_Exp\_Algo}. 

\medskip
\noindent \underline{Note:} In the subsequent round, if the edge corresponding to port $p$ is present and $L_{G_1}$ observes that $L_{G_2}$ is alive at the adjacent node, then $L_{G_2}$ traverses port $p$ and gets this information of group change. Otherwise, the updated group $G_2$ reaches $L_{G_2}$ and propagates the information accordingly.
\item[B4.] In all other cases, $G_1$ continues its wait for the missing edge to reappear.
\end{itemize}

\begin{figure}[ht]
    \centering
    \includegraphics[width=0.5\linewidth]{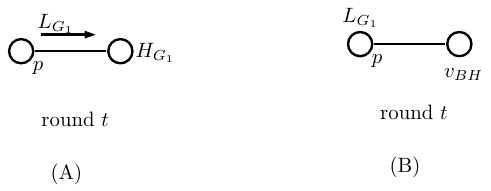}
    \caption{(A) $L_{G_1}$ moves via port $p$, if it finds that $H_{G_1}$ is alive, (B) $L_{G_1}$ declares port $p$ leads to $v_{BH}$, if it does not find at adjacent node corresponding port $p$}
    \label{fig:case16}
\end{figure}

\begin{figure}[ht]
    \centering
    \includegraphics[width=0.75\linewidth]{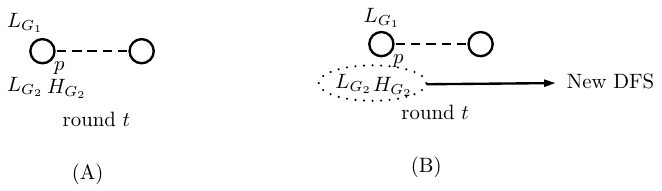}
    \caption{(A) $L_{G_1}$ finds $L_{G_2}$ and $H_{G_2}$ with $prt\_out=p$, (B) $L_{G_2}$, $G_2$ restarts its DFS, and $L_{G_1}$ waits for the edge corresponding to port $p$ to reappear.}
    \label{fig:case17}
\end{figure}

\begin{figure}[ht]
    \centering
    \includegraphics[width=0.5\linewidth]{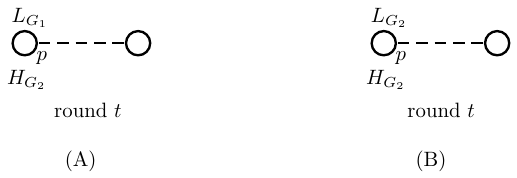}
     \caption{(A) $L_{G_1}$ finds $H_{G_2}$ with $prt\_in=p$, (B) $L_{G_1}$ becomes $L_{G_2}$ declares the completion of the cautious movement of $G_2$.}
    \label{fig:case18}
\end{figure}

\noindent \textbf{Case (ii) (only $\bm{L_{G_1}}$ is present at node $\bm{v}$ and have $\bm{state = explore}$)}: Let $p$ be the port computed by $G_1$ (i.e., $prt\_out = p$) at round $t$. In this case, the cautious movement has already been initiated at round $t$, that is, $H_{G_1}$ has moved through port $p$ in round $t$. At some round $t_1 \geq t+1$, two cases arise: (i) the edge corresponding to port $p$ is present, or (ii) it is not present. 

\medskip
\noindent C*. If the edge corresponding to port $p$ is present at round $t_1$, then $L_{G_1}$ checks, using $1$-hop visibility, whether $H_{G_1}$ is alive at the adjacent node. If $H_{G_1}$ is present, $L_{G_1}$ traverses port $p$ (refer to Figure \ref{fig:case16}(A)); otherwise $L_{G_1}$ concludes that port $p$ leads to the black hole from node $v$(refer to Figure \ref{fig:case16}(B)). In this case, $L_{G_1}$ does not consider whether any agent from $G_2$ is present.

\medskip

If the edge corresponding to port $p$ is not present at round $t_1$, it does the following.

\begin{itemize}
    \item[C1.] If $L_{G_1}$ finds both $L_{G_2}$ and $H_{G_2}$ at node $v$ (refer to Figure \ref{fig:case17}(A)), and the outgoing port is also $p$, then it does the following. In this case, the agents in $G_2$ infer that $G_1$ is blocked due to a missing edge and start a new DFS as per \textsc{Modified\_Exp\_Algo} (refer to Figure \ref{fig:case17} (B)). Consequently, $L_{G_1}$ updates $(prt, L_{G_1}, r)$ and $L_{G_2}$ updates $(ID_{L_{G_1}}, prt, r')$, and both groups continue their execution according to \textsc{Modified\_Exp\_Algo}.

    \item[C2.] If only $L_{G_2}$ is present at the current node $v$, it is definite that the $prt\_out$ values for $L_{G_2}$ and $L_{G_1}$ are different. This is because, as per our algorithm, both $H_{G_1}$ and $H_{G_2}$ can not move through the same port if both groups are together. Suppose it were allowed. If the adjacent node was $v_{BH}$, then both $H_{G_1}$ and $H_{G_2}$ would have died in $v_{BH}$ together. Hence, the leaders of both groups would remain stuck, and neither of them could identify the location of $v_{BH}$. 
    
    \item[C3.] If only $H_{G_2}$ is present at the current node $v$ and its $prt\_in$ is the same as that of $G_1$ (refer to Figure \ref{fig:case18} (A)), then $L_{G_1}$ becomes $L_{G_2}$ (refer to Figure \ref{fig:case18} (B)). This new group, $G_2$, considers the cautious movement to be complete. 

    \item[C4.]  In all other cases, $L_{G_1}$ continues its wait for the missing edge to reappear.
\end{itemize}

\medskip

\begin{figure}
    \centering
    \includegraphics[width=0.15\linewidth]{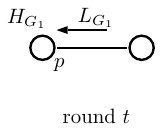}
    \caption{$H_{G_1}$ stays at node $v$, and $L_{G_1}$ moves to node $v$.}
    \label{fig:case19}
\end{figure}

\begin{figure}
    \centering
    \includegraphics[width=0.8\linewidth]{Journal_Version/R12.pdf}
    \caption{(A) $H_{G_1}$ finds $L_{G_2}$ and $H_{G_2}$ with $prt\_out=p$, (B) $L_{G_2}$ and $H_{G_2}$ continue their movement as per \textsc{Modified\_Exp\_Algo}.}
    \label{fig:case20}
\end{figure}

\begin{figure}
    \centering
    \includegraphics[width=0.5\linewidth]{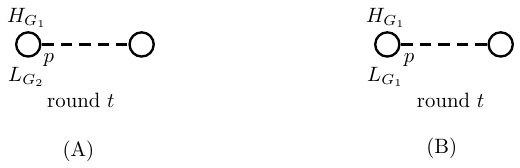}
    \caption{(A) $H_{G_1}$ finds $L_{G_2}$ and $H_{G_2}$ with $prt\_out=p$, (B) $L_{G_2}$ and $H_{G_2}$ continue their movement as per \textsc{Modified\_Exp\_Algo}.}
    \label{fig:case21}
\end{figure}

\noindent \textbf{ Case (iii) (only $\bm{H_{G_1}}$ is present at a node $\bm{v}$ and has $\bm{state=explore}$)}: Let $p$ be the port through which agent $H_{G_1}$ reaches node $v$ (i.e., $prt\_in = p$) at round $t$. In this case, the cautious movement has already been initiated at round $t$. At some round $t_1 \geq t+1$, two cases arise: (i) the edge corresponding to port $p$ is present, or (ii) it is not present.

\medskip
\noindent D*. If the edge corresponding to port $p$ is present at round $t_1$, then $L_{G_1}$ traverses port $p$ (refer to Figure \ref{fig:case19}). In this case, $H_{G_1}$ does not consider whether any agent from $G_2$ is present.

\medskip

If the edge corresponding to port $p$ is not present at round $t_1$, it does the following.

\begin{itemize}
    \item[D1.] If $H_{G_1}$ finds both $L_{G_2}$ and $H_{G_2}$ at node $v$, and the outgoing port is also $p$ (refer to Figure \ref{fig:case20}(A)), then it does the following. In this case, the agents in $G_2$ infer that $G_1$ is blocked due to a missing edge and update their movement according to \textsc{Modified\_Exp\_Algo} (refer to Figure \ref{fig:case20}(B)). In this case, they do not restart their DFS unless they are in the $backtrack$ state.

    \item[D2.] If only $L_{G_2}$ is present at the current node $v$, and its $prt\_out$ is the same as that of $L_{G_1}$ (refer to Figure \ref{fig:case21}(A)), then $H_{G_2}$ becomes $L_{G_1}$ (refer to Figure \ref{fig:case21}(B)). This new group, $G_1$, considers the cautious movement to be complete. 
    
    \item[D3.] If only $H_{G_2}$ is present at the current node $v$, it is definite that the $prt\_in$ values for $H_{G_2}$ and $H_{G_1}$ are different. This is because, as per our algorithm, both $H_{G_1}$ and $H_{G_2}$ can not move through the same port if both groups are together. Suppose it were allowed. If the adjacent node was $v_{BH}$, then both $H_{G_1}$ and $H_{G_2}$ would have died in $v_{BH}$ together. Hence, the leaders of both groups would remain stuck, and neither of them could identify the location of $v_{BH}$. 

    \item[D4.]  In all other cases, $H_{G_1}$ continues its wait for the missing edge to reappear.
\end{itemize}
 
These are all the cases that may occur while performing cautious movement, due to which group exchange may occur. Note that if the state of $G_1$ or $G_2$ is $backtrack$, then they do not need to move cautiously, since the node reached after $backtrack$ has already been explored. Now, let us suppose $G_1$ is at node $u$ and must backtrack via port $p$, which corresponds to edge $(u,v)$. If edge $(u,v)$ is present, $L_{G_1}$ and $H_{G_1}$ traverses port $p$. In this case, $L_{G_1}$ and $H_{G_1}$ do not consider whether any agent from $G_2$ is present.

\medskip

If the edge corresponding to port $p$ is not present at round $t$, as in Case (I), analogous cases arise, as described below.
\begin{itemize}
    \item[E1.] If $G_1$ finds both $L_{G_2}$ and $H_{G_2}$ at node $v$, and the outgoing port is also $p$, then $G_1$ waits for edge to be reappear, while $G_2$ waits at node $v$. In this case, the agents in $G_2$ infer that $G_1$ is blocked due to a missing edge and update their movement according to \textsc{Modified\_Exp\_Algo}. Consequently, $L_{G_1}$ updates $(prt, L_{G_1}, r)$ and $L_{G_2}$ updates $(ID_{L_{G_1}}, prt, r')$, and both groups continue their execution according to \textsc{Modified\_Exp\_Algo}.

\item[E2.] If only $L_{G_2}$ is present at the current node $v$ and its $prt\_out$ is the same as that of $G_1$, then $L_{G_1}$ assigns its helper role to the helper of $G_2$ and infers that the corresponding port has already been explored. In particular, $H_{G_1}$ becomes the helper of $G_2$. In this case, the agents in $G_2$ again infer that $G_1$ is blocked due to a missing edge and update their movement according to \textsc{Modified\_Exp\_Algo}. Consequently, $L_{G_1}$ updates $(prt, L_{G_1}, r)$ and $L_{G_2}$ updates $(ID_{L_{G_1}}, prt, r')$, and both groups continue their execution according to \textsc{Modified\_Exp\_Algo}. 

\medskip
\noindent \underline{Note:} In the subsequent round, if the edge corresponding to port $p$ is present and $L_{G_1}$ finds $H_{G_2}$ to be alive at the adjacent node as $L_{G_1}$ was backtracking. Otherwise, the updated group $G_2$ reaches $H_{G_2}$ and propagates the information accordingly.
    
\item[E3.] If only $H_{G_2}$ is present at the current node $v$ and its $prt\_in$ is the same as that of $G_1$, then $H_{G_1}$ becomes $L_{G_2}$. In this case, the agents in $G_2$ again infer that $G_1$ is blocked due to a missing edge and update their movement according to \textsc{Modified\_Exp\_Algo}. Consequently, $L_{G_1}$ updates $(prt, L_{G_1}, r)$ and $L_{G_2}$ updates $(ID_{L_{G_1}}, prt, r')$, and both groups continue their execution according to \textsc{Modified\_Exp\_Algo}. 

\medskip
\noindent \underline{Note:} In the subsequent round, if the edge corresponding to port $p$ is present and $L_{G_1}$ finds $L_{G_2}$ to be alive at the adjacent node as $L_{G_1}$ was backtracking. Otherwise, the updated group $G_2$ reaches $L_{G_2}$ and propagates the information accordingly.
\item[E4.] In all other cases, $G_1$ continues its wait for the missing edge to reappear.
\end{itemize}

While executing \textsc{Modified\_Exp\_Algo}, if $G_1$ (resp.\ $G_2$) determines that the exploration is complete, then it concludes that there is no black hole in $G$ and proceeds to the next phase. While executing \textsc{Modified\_Exp\_Algo} in Phase 1, if $G_1$ (resp.\ $G_2$) determines at any node $w$ that $Presence\_BH=1$, then it terminates. 

\medskip

\noindent \textbf{Phase 2:} If $G_i$ is in this phase, then the agents execute \textsc{Modified\_Exp\_Algo}. Let $a_1$ and $a_2$ be the agents in $G_i$. Agent $a_1$ assumes the role of $L_{G_1}$ and agent $a_2$ assumes the role of $L_{G_2}$. They execute \textsc{Modified\_Exp\_Algo} without cautious movement. At each visited node, they set the parameter ${Presence\_BH}$ to $1$. After $48 \cdot a_j.N^4$ rounds, for $j \in \{1,2\}$, agent $a_j$ stops the execution of \textsc{Modified\_Exp\_Algo}, and terminates.

In the next section, we provide the correctness of our algorithm.
\subsection{Correctness and analysis of the algorithm}

In this section, we first establish the correctness of \textsc{Modified\_Exp\_Algo}. Let $a_1$ and $a_2$ be two agents initially co-located at a node $v$. Agent $a_1$ acts as the leader of $G_1$ and sets $a_1.L_{G_1}=1$ and $a_1.N=1$, while agent $a_2$ acts as the leader of $G_2$ and sets $a_2.L_{G_2}=1$ and $a_2.N=1$. Suppose that agent $a_1$ is at node $u$ and intends to traverse port $p$ to reach node $v$, but the edge $(u,v)$ is missing at round $t$. We establish the following claim.

\begin{myclaim}\label{claim:G_1}
If the edge $(u,v)$ is missing at every round $t_1 \in [t, t+12m]$, then agent $a_2$ correctly concludes that the exploration is complete. Moreover, $a_2.N = n$.
\end{myclaim}

\begin{proof}
At round $t$, agent $a_2$ may be located at some node $u_1$. Without loss of generality, assume that $u_1 \neq u$. We argue that within the next $8m$ rounds, agent $a_2$ meets agent $a_1$. Between rounds $t$ and $t+4m$, one of the following events occurs: (i) $a_2$ meets $a_1$, (ii) $a_2$ attempts to move from node $v$ to node $u$ while in the backtrack state, (iii) $a_2$ attempts to move from node $v$ to node $u$ while in the explore state, or (iv) $a_2$ reaches the root of its current DFS tree. According to the algorithm, in cases (ii) and (iv), agent $a_2$ restarts its DFS from the corresponding node. In case (iii), agent $a_2$ updates its outgoing port as $(p+1)\bmod \deg(v)$ and continues its DFS according to its current state; within these rounds, this situation eventually transitions to one of the cases (i), (ii), or (iv).

In cases (ii) or (iv), within the subsequent $4m$ rounds, agent $a_2$ reaches node $u$. This follows from the fact that a DFS traversal requires at most $4m$ rounds, and the edge $(u,v)$ remains missing throughout this interval. Moreover, agent $a_2$ cannot starts its new DFS as it never reaches at node $v$ and try to move through edge $(v,u)$ in $backtrack$ state. Let $t_1 \in [t, t+8m]$ be the first round in which $a_2$ meets $a_1$.

Since agent $a_1$ is stuck at node $u$ from round $t$, it sets $a_1.(prt, L_{G_1}, r) = (p,1,1)$ at round $t$, and increments the value of $r$ by $1$ in each subsequent round as long as the edge $(u,v)$ remains missing. Thus, at round $t_1$, we have $a_1.(prt, L_{G_1}, r) = (p,1,t_1 + 1)$. Upon meeting $a_1$, agent $a_2$ sets $(ID_{L_{G_1}}, prt, r') = (a_1.ID, p, t_1 + 1)$ and restarts its DFS from node $u$. Since the edge $(u,v)$ remains missing during the interval $[t_1, t_1+4m] \subseteq [t, t+12m]$, agent $a_2$ is able to explore the graph $G \setminus \{(u,v)\}$ without being blocked at node $v$ and without restarting its DFS. When $a_2$ returns to node $u$ and meets $a_1$ again, it observes that $r = r'$, and hence concludes that the same edge $(u,v)$ has remained missing throughout its traversal. Therefore, it infers that it has explored $G \setminus \{(u,v)\}$.

At round $t_1$, agent $a_2$ initializes $a_2.N = 1$ and increments this value each time it visits a previously unexplored node as per the current DFS. Consequently, by the end of the DFS traversal, it obtains $a_2.N = n$. This completes the proof.
\end{proof}

Based on this claim, we have the following theorem.

\begin{theorem}\label{thm:correctness_exp}
While executing \textsc{Modified\_Exp\_Algo}, at least one of $L_{G_1}$ or $L_{G_2}$ is able to determine that the exploration is complete within first $48m^2$ rounds, and correctly computes the value of $n$.
\end{theorem}
\begin{proof}
Let $P$ be a path which is followed by $L_{G_1}$ as per its DFS. At any node on $P$, suppose the adversary blocks $L_{G_1}$ on an edge $e$ for at most $12m$ consecutive rounds. By Claim~\ref{claim:G_1}, during this period, $L_{G_2}$ is able to explore the graph and correctly conclude that the exploration is complete. To prevent $L_{G_2}$ from completing the exploration, the adversary must allow the edge $e$ to reappear. In that case, $L_{G_1}$ successfully advances by one step along its DFS path. Since a DFS traversal requires at most $4m$ rounds, the length of the DFS path $P$ is at most $4m$. Moreover, at each step along $P$, the adversary can block $L_{G_1}$ for at most $12m$ rounds. Therefore, within $12m \cdot 4m = 48m^2$ rounds, either $L_{G_2}$ concludes that the exploration is complete, or $L_{G_1}$ completes its DFS traversal.

Finally, consider the case where $L_{G_1}$ is blocked at a node $u$ and $L_{G_2}$ restarts its DFS from $u$. It may be possible that $L_{G_1}$ subsequently moves and gets blocked again on the same node by the same port. In this case, $L_{G_2}$ detects the change by observing that the stored values satisfy $r \neq r'$. Furthermore, if $L_{G_2}$ encounters $L_{G_1}$ at a node different from the root of its current DFS, it gets stuck due to the $backtrack$ state, or does not find $L_{G_1}$ at the root, it infers that $L_{G_1}$ has progressed. 

When agent $L_{G_2}$ determines that the exploration is complete, it follows from Claim~\ref{claim:G_1} that its counter satisfies $N = n$. Similarly, agent $L_{G_1}$ increments its counter each time it visits a previously unexplored node; therefore, upon completing the DFS traversal, it also obtains $N = n$. 

Hence, in all cases, at least one of $L_{G_1}$ or $L_{G_2}$ correctly determines that the exploration is complete within $48m^2$ rounds, and correctly computes the value of $n$. This completes the proof.
\end{proof}

 Our algorithm for solving 1-BHSV is based on \textsc{Modified\_Exp\_Algo}, where each move of the exploration procedure is simulated using cautious movement. In this simulation, a helper first attempts to traverse the intended port, and the leader follows only after verifying that the move is safe. However, due to dynamic changes in the graph, such as edge deletions or temporary separations between agents, the execution may deviate from the ideal synchronized behavior of \textsc{Modified\_Exp\_Algo}. To address this, we employ a group exchange mechanism that dynamically reassigns roles among agents. This ensures that, despite disruptions, each group always maintains a valid leader–helper pair and retains all necessary exploration information. Consequently, the combined behavior of the agents remains consistent with a correct execution of \textsc{Modified\_Exp\_Algo} under cautious movement.

\begin{lemma}\label{lm:checking}
    The group exchange mechanism preserves the correctness of the simulation of \textsc{Modified\_Exp\_Algo} under cautious movement.
\end{lemma}
\begin{proof}
    In \textsc{Modified\_Exp\_Algo}, agent $a_1$ acts as $L_{G_1}$ and strictly follows its DFS path, while agent $a_2$ acts as $L_{G_2}$ and may either restart its DFS or skip edges based on the computation. In our 1-BHSV algorithm, we extend this setting by replacing each leader with a group: $G_1$ consists of $L_{G_1}$ and $H_{G_1}$, and $G_2$ consists of $L_{G_2}$ and $H_{G_2}$. Each movement of $L_{G_1}$ (resp.\ $L_{G_2}$) in \textsc{Modified\_Exp\_Algo} is simulated by a cautious movement of $G_1$ (resp.\ $G_2$), which takes two rounds: first, the helper attempts the move, and then the leader follows upon verifying safety. The purpose of this cautious movement is to avoid losing multiple agents simultaneously, since blindly traversing an edge may lead all agents to the black hole $v_{BH}$.

  However, missing edges may interrupt the cautious movement performed by $G_1$ (resp.\ $G_2$). The role of the group exchange mechanism is to guarantee that $G_2$ continues to make progress in every round, even when $G_1$ is blocked on its DFS path. Suppose that at round $t$, $G_1$ is at node $v$, computes $prt\_out = p$, and is in the $explore$ state. In this case, the node, say $v'$, reachable via port $p$ from node $v$ must be visited using cautious movement, as it may not have been previously explored by $G_1$ and could potentially be $v_{BH}$. Let $p'$ be the port at node $v'$ for edge $(v,v')$. The execution may be interrupted in the following ways:

\begin{itemize}
    \item \textbf{Interruption 1 (Missing edge at round $t$):} At round $t$, the edge corresponding to port $p$ is missing, and hence $G_1$ cannot initiate its cautious movement.
    
    \item \textbf{Interruption 2 (Missing edge at round $t+1$):} At round $t$, the edge corresponding to port $p$ is present, and $H_{G_1}$ moves through port $p$. However, at round $t+1$, the edge corresponding to port $p$ becomes missing. In this case, $L_{G_1}$ cannot verify whether $H_{G_1}$ is alive and must wait until the edge reappears. Meanwhile, $H_{G_1}$ (if alive) remains at its current node so that $L_{G_1}$ can eventually reach it.
\end{itemize}

In the case of Interruption 1, it is important to consider the position of $G_2$.

\begin{itemize}
    \item Both $L_{G_2}$ and $H_{G_2}$ are at node $v$, and $prt\_out = p$ (refer to B1). In this case, they observe that $L_{G_1}$ and $H_{G_1}$ are blocked due to the missing edge $(v, v')$. According to our strategy, $G_2$ restarts its DFS from node $v$ and continues execution. This is consistent with \textsc{Modified\_Exp\_Algo}, where whenever $L_{G_1}$ and $L_{G_2}$ are co-located, have the same outgoing port, and the corresponding edge is missing, $L_{G_2}$ restarts its DFS.

\item Only $L_{G_2}$ is at node $v$, and $prt\_out = p$ (refer to B2). In this case, at some round $t' < t$, $G_2$ has already initiated its cautious movement, and $H_{G_2}$ has moved to node $v'$ from node $v$. From round $t'+1$ to $t$, the edge $(v, v')$ remains missing, which leads to this configuration. If no action is taken, both groups may remain blocked. According to our strategy, $H_{G_1}$ takes over the role of $H_{G_2}$, and $G_2$ restarts its DFS from node $v$, treating $G_1$ as blocked due to the missing edge $(v, v')$. This is consistent with \textsc{Modified\_Exp\_Algo}, where whenever $L_{G_1}$ and $L_{G_2}$ are co-located, have the same outgoing port, and the corresponding edge is missing, $L_{G_2}$ restarts its DFS.
    \item Only $H_{G_2}$ is at node $v$, and $prt\_in = p$ (refer to B3). In this case, at some round $t' < t$, $G_2$ has already initiated its cautious movement from node $v'$, and $H_{G_2}$ has moved to node $v$ from node $v'$. From round $t'+1$ to $t$, the edge $(v, v')$ remains missing, which leads to this configuration. If no action is taken, both groups may remain blocked. According to our strategy, $H_{G_1}$ takes over the role of $L_{G_2}$, and $G_2$ restarts its DFS from node $v$, treating $G_1$ as blocked due to the missing edge $(v, v')$. This is consistent with \textsc{Modified\_Exp\_Algo}, where whenever $L_{G_1}$ and $L_{G_2}$ are co-located, have the same outgoing port, and the corresponding edge is missing, $L_{G_2}$ restarts its DFS.
    \item Both $L_{G_2}$ and $H_{G_2}$ are at node $v'$, and $prt\_out = p'$ (refer to B3). In this case, $G_2$ performs a successful movement, since the group is intact. If the edge corresponding to port $p'$ is present, both agents move together; otherwise, they either try the next port or restart their DFS according to the rules of \textsc{Modified\_Exp\_Algo}. Hence, this behavior is consistent with \textsc{Modified\_Exp\_Algo}, where $L_{G_2}$, when alone at a node and encountering a missing edge corresponding to its $prt\_out$, either skips the edge or restarts its DFS.

     \item Only $H_{G_2}$ is at node $v'$, and $prt\_in = p'$. In this case, at some round $t' < t$, $G_2$ has already initiated its cautious movement from node $v$, and $H_{G_2}$ has moved from $v$ to $v'$. Thus, this configuration is identical to Case B2 discussed earlier. Although $H_{G_2}$ is not immediately aware of this change, once the edge reappears, either $L_{G_1}$ can communicate that the role of $H_{G_2}$ has been updated (e.g., it becomes $H_{G_1}$), or the updated group $G_2$ eventually reaches node $v'$ and propagates this information. Hence, consistency of roles and execution is maintained.
    \item Only $L_{G_2}$ is at node $v'$, and $prt\_out = p'$. In this case, at some round $t' < t$, $G_2$ has already initiated its cautious movement from node $v'$, and $H_{G_2}$ has moved from $v'$ to $v$. Thus, this configuration is identical to Case B3 discussed earlier. Although $L_{G_2}$ is not immediately aware of this change, once the edge reappears, either $L_{G_1}$ can communicate that the role of $L_{G_2}$ has been updated (e.g., it becomes $H_{G_1}$), or the updated group $G_2$ eventually reaches node $v'$ and propagates this information. Hence, consistency of roles and execution is maintained.

   \item If $G_2$ is at a node $w \notin \{v, v'\}$, then it successfully executes its movement at round $t$. If $G_2$ is at node $v$ and its $prt\_out \neq p$, then it also executes its movement at round $t$. Similarly, if $G_2$ is at node $v'$ and its $prt\_out \neq p'$, then it successfully executes its movement at round $t$. 

In all these cases, no group exchange is required, and $G_1$ simply waits for the edge $(v, v')$ to reappear. This corresponds to B4.
\end{itemize}

\medskip
Now, we discuss Interruption 2. In this case, it is important to consider the position of $G_2$ at round $t$. 

\begin{itemize}
    \item Both $L_{G_2}$ and $H_{G_2}$ are at node $v$, and $prt\_out = p$ (refer to A1). According to our strategy, $G_2$ does not perform cautious movement together with $G_1$, since in such a case both $H_{G_1}$ and $H_{G_2}$ may traverse the same port and potentially be destroyed at $v_{BH}$, leaving $L_{G_1}$ and $L_{G_2}$ blocked indefinitely. To avoid this, only $G_1$ initiates the cautious movement, while $G_2$ waits at node $v$. 

However, if at round $t+1$ the edge $(v, v')$ is removed by the adversary, then $G_2$ infers that $G_1$ is blocked. According to our strategy, $G_2$ restarts its DFS from node $v$ and continues execution. This is consistent with \textsc{Modified\_Exp\_Algo}, where whenever $L_{G_1}$ and $L_{G_2}$ are co-located, have the same outgoing port, and the corresponding edge is missing, $L_{G_2}$ restarts its DFS.

\item Only $L_{G_2}$ is at node $v$, and $prt\_out = p$ (refer to A2). In this case, at some round $t' < t$, $G_2$ has already initiated its cautious movement, and $H_{G_2}$ has moved from node $v$ to node $v'$. Since the edge $(v, v')$ is present at round $t$, all agents at node $v$ can observe whether $H_{G_2}$ is alive at node $v'$ using $1$-hop visibility. Therefore, either they detect that port $p$ leads to $v_{BH}$, or $G_1$ does not perform cautious movement, as it has already been executed by $G_2$.

    \item Only $H_{G_2}$ is at node $v$, and $prt\_in = p$ (refer to A3). In this case, at some round $t' < t$, $G_2$ has already initiated its cautious movement from node $v'$, and $H_{G_2}$ has moved from $v'$ to $v$. Since the edge $(v, v')$ is present at round $t$, the agents of $G_1$ at node $v$ can safely traverse port $p$ without further verification. This is because node $v'$ is already known to be safe, as it was previously occupied by $G_2$ at round $t'$. Therefore, no additional cautious movement is required in this case.

    \item Both $L_{G_2}$ and $H_{G_2}$ are at node $v'$, and $prt\_out = p'$ (refer to B3). In this case, $G_1$ does not have to perform catious movement, and they can move.

     \item Only $H_{G_2}$ is at node $v'$, and $prt\_in = p'$. In this case, at some round $t' < t$, $G_2$ has already initiated its cautious movement from node $v$, and $H_{G_2}$ has moved from $v$ to $v'$. Thus, this configuration is identical to Case A2 discussed earlier. 
    \item Only $L_{G_2}$ is at node $v'$, and $prt\_out = p'$. In this case, at some round $t' < t$, $G_2$ has already initiated its cautious movement from node $v'$, and $H_{G_2}$ has moved from $v'$ to $v$. Thus, this configuration is identical to Case A3 discussed earlier. 
\end{itemize}

Analogously, a similar set of cases arises when $G_1$ has already initiated its cautious movement at round $t$, and $G_1$ gets stuck at round $t+1$ onward. At some round $t' \geq t+1$, the positions of $G_2$ may be at node $v$ or $v'$. These scenarios are covered by cases C*, C1, C2, C3, C4 and D*, D1, D2, D3, D4. 

In the backtrack state, $G_1$ does not need to perform cautious movement, since node $v'$ has already been visited and is therefore known to be safe. However, $G_1$ may still get blocked due to missing edges, and such situations are handled in cases E1, E2, E3, and E4.

This completes the proof.
\end{proof}

\begin{theorem}\label{thm:1-hop}
The problem of $1$-BHSV can be solved by four agents starting from a rooted initial configuration in $O(n^4)$ rounds when agents are equipped with $1$-hop visibility, $O(\log n)$ memory, and each node has $O(\log n)$ storage.
\end{theorem}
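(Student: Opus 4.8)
The plan is to combine the exploration guarantee of Lemma~\ref{lm:correctness} with the cautious-movement discipline, arguing that the only node never visited by a surviving group is $v_{BH}$ itself, and that the group which fails to enter it actually \emph{detects} the port leading to it. First I would observe that, by Lemma~\ref{lm:correctness}, in the absence of a black hole one of $G_1,G_2$ would visit every node within $O(m^2)$ rounds; the presence of $v_{BH}$ only removes $v_{BH}$ (and edges incident to it) from what can be safely traversed, so the same DFS bookkeeping still forces one group to visit every \emph{safe} node within $O(m^2)$ rounds, provided no agent needed for that group's progress has been lost. The key point is that in our algorithm a group only loses an agent through the \emph{helper} probing a port: when a helper steps through port $p$ into what turns out to be $v_{BH}$, it dies, but at the next round the leader (still alive at $v$, by 1-hop visibility and the fact that at most one missing edge can hide $v_{BH}$ from view) sees that the helper is gone even though the edge $p$ is present, and thereby concludes that port $p$ from $v$ leads to $v_{BH}$. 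This is exactly the event described in case (II), and it already solves 1-BHS.

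Next I would handle the bookkeeping that makes the detection rule sound. The cautious-movement invariant to prove is: whenever a helper moves through a freshly computed port, its leader is co-located one hop away and will, in the very next round, either follow (if the node is safe and the edge present) or correctly infer a $v_{BH}$-port (if the node is $v_{BH}$) or wait (if the edge went missing, in which case the helper is safe on the far side and the group-change subroutine of cases (I)--(III) keeps at least one group moving). The subtlety is ruling out the "two helpers into $v_{BH}$ together" scenario: the algorithm explicitly forbids $H_{G_1}$ and $H_{G_2}$ from probing the same port in the same round when the groups are co-located, and the excerpt's case analysis already argues why this would be fatal. So I would state and use the invariant that at most one helper ever enters $v_{BH}$, and when it does, its leader survives and detects the port. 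Since a helper enters $v_{BH}$ at most once over the whole execution, at most one agent is ever lost, leaving at least three agents alive, of which at least one is a leader that either performs the detection directly or belongs to the group that finishes exploring all safe nodes.

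Then I would assemble the time bound: the group that does not lose its helper runs exactly the dynamic DFS analysed in Lemma~\ref{lm:correctness}, so within $O(m^2) = O(|E|^2)$ rounds it either visits every safe node — in which case, since the footprint degree and ports of every visited node are known, and $v_{BH}$ is the unique node never visited, some visited neighbour $v$ of $v_{BH}$ has an incident port $p$ that was never successfully traversed and whose far endpoint was never seen to be safe, which identifies $p$ as leading to $v_{BH}$ (and this is also exactly the "only $L_{G_2}$ at $v$, helper not at the $p$-node" branch of case (I)) — or a helper has already been lost and the corresponding leader has detected the port within $O(m^2)$ rounds by the rule above. In either branch a surviving agent reaches a neighbour of $v_{BH}$ and names the port to $v_{BH}$ within $O(|E|^2)$ rounds; memory usage is $O(\log n)$ per agent (IDs, state, two port values, a DFS label bounded by $O(m)$) and $O(\log n)$ per whiteboard (the $parent$ and $dfs\_label$ entries), matching the claim.

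The hard part will be the second paragraph: rigorously proving the cautious-movement invariant in the face of the dynamic adversary, i.e.\ checking that in every one of the group-change cases (I)--(III) and the backtracking cases, after the role swap at least one well-formed group (a leader plus a distinct helper, with consistent whiteboard state) continues a valid DFS, so that Lemma~\ref{lm:correctness}'s progress argument still applies to that group. The delicate sub-point is that a role swap must never silently discard the detection opportunity: when a helper has already fallen into $v_{BH}$, the surviving leader must be kept alive and not reassigned in a way that loses the information, which is why $a_1$ is pinned as $L_{G_1}$ on its original DFS path and the swaps only ever move agents \emph{into} a group that is about to restart — I would make this precise and then the rest follows by the analysis already carried out for the single-agent case.
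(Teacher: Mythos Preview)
Your outline follows the paper's approach---use Lemma~\ref{lm:correctness} to guarantee one group sweeps the graph, and use cautious walk plus 1-hop visibility so the surviving leader of whichever group probes $v_{BH}$ can detect the fatal port---but the invariant you plan to prove, ``at most one helper ever enters $v_{BH}$'', is false, and the companion clause ``if the edge went missing, the helper is safe on the far side'' ignores exactly the bad case. The adversary can let $H_{G_1}$ fall into $v_{BH}$ through edge $e_1$ and then keep $e_1$ missing indefinitely; $L_{G_1}$ cannot distinguish this from a live helper behind a missing edge, so it simply waits, and meanwhile $G_2$ (now the only mobile group) continues until $H_{G_2}$ falls into $v_{BH}$ through a different edge $e_2$. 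Both helpers can thus be lost before any detection, so your sentence ``the group that does not lose its helper runs exactly the dynamic DFS analysed in Lemma~\ref{lm:correctness}'' may have no referent.

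The repair is the $1$-boundedness you allude to but do not deploy at the right place: once two leaders are each parked at a distinct neighbour of $v_{BH}$ waiting on distinct incident edges $e_1,e_2$, the adversary can hide at most one of them per round, so in the very next round one leader sees its edge present with no helper at the far end and declares the port. The paper's proof does not spell this out either---it simply says the group that visits every node has its helper enter $v_{BH}$ and its leader ``does not find'' the helper via 1-hop visibility---so your plan is no less rigorous than the paper once you replace the false invariant with this two-leader argument. Two smaller points: the $dfs\_label$ can increment $O(m^2)$ times (once per restart of $G_2$, and restarts may occur throughout the $O(m^2)$-round execution), not $O(m)$, though this is still $O(\log n)$ bits; and your alternative detection (``identify $v_{BH}$ as the unique unvisited node from the traversal record'') is not available under $O(\log n)$ agent memory and is in any case unnecessary, since a helper must die before any group can finish sweeping the safe nodes.
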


\begin{proof}
Initially, the four agents are partitioned into two groups, $G_1$ and $G_2$, each consisting of two agents. The exploration performed by a single agent is simulated by each group via cautious movement. Consequently, each logical step of the exploration strategy is implemented using two rounds (not necessarily consecutive) of cautious movement.

By Theorem~\ref{thm:correctness_exp}, within $48m^2$ rounds, at least one of $L_{G_1}$ or $L_{G_2}$ visits every node of $G$. Moreover, due to Lemma~\ref{lm:checking}, in every pair of rounds, at least one group successfully performs a movement corresponding to the exploration strategy. Hence, the simulation preserves the progress of \textsc{Modified\_Exp\_Algo}.

If a black hole exists, it is detected within $ 96m^2$ rounds. Indeed, by Theorem~\ref{thm:correctness_exp}, at least one of $L_{G_1}$ or $L_{G_2}$ completes the exploration. Since each step of \textsc{Modified\_Exp\_Algo} is simulated via cautious movement, one of the groups completes the exploration and, during this process, correctly identifies the port leading to the black hole.

If no black hole exists, then by Theorem~\ref{thm:correctness_exp}, one of the groups determines that the exploration is complete and obtains the value of $n$ using their $N$ parameter. Without loss of generality, assume that $G_2$ completes Phase~1, and let $a_1$ and $a_2$ be its agents. In Phase~2, agent $a_1$ assumes the role of $L_{G_1}$ and agent $a_2$ assumes the role of $L_{G_2}$, and they execute \textsc{Modified\_Exp\_Algo} without cautious movement. Again, by Theorem~\ref{thm:correctness_exp}, within $48m^2$ rounds, one of these agents visits every node of $G$. Consequently, at each node, the parameter $Presence\_BH$ is set to $1$. Any agent still in Phase~1 can detect this and terminate. Since $L_{G_1}$ and $L_{G_2}$ know the value of $n$, they can terminate after $48m^2 \leq 48n^4$ rounds, using the bound $m \leq n^2$. Therefore, all agents terminate within $O(n^4)$ rounds.

Finally, we analyze the memory and storage requirements. Each agent maintains a constant number of Boolean parameters, namely $a_i.state$, $a_i.L_{G_1}$, $a_i.H_{G_1}$, $a_i.L_{G_2}$, and $a_i.H_{G_2}$, which require $O(1)$ memory. The remaining parameters, including $a_i.ID$, $a_i.prt\_in$, $a_i.prt\_out$, $a_i.(prt, L_{G_1}, r)$, and $a_i.(ID_{L_{G_1}}, prt, r')$, require $O(\log n)$ memory, since the values $r$ and $r'$ are bounded by $O(n^4)$. Similarly, the DFS label maintained by the agents is at most $O(n^4)$ and can be stored using $O(\log n)$ bits. At each node $v$, the whiteboard stores $wb_v(G_1).(parent)$, $wb_v(G_2).(parent, dfs\_label)$, and $Presence\_BH$. Since $parent$ corresponds to a port number, $Presence\_BH$ is a boolean parameter and $dfs\_label$ is bounded by $O(n^4)$, the total storage required per node is $O(\log n)$. This completes the proof.
\end{proof}

\section{Algorithm using global communication}
In this section, we present an algorithm that solves the $1$-BHSV problem using $\delta_{BH}+2$ agents, initially located at safe nodes of the graph, each equipped with global communication and $0$-hop visibility. Since the agents have $0$-hop visibility, they cannot detect whether an adjacent edge is missing.

Our approach to solving $1$-BHSV consists of two components: an exploration procedure and a cautious movement mechanism. The exploration procedure ensures that every node of the graph is visited by at least one agent, while the cautious movement mechanism guarantees safety by preventing all agents from simultaneously entering the node $v_{BH}$. However, implementing this approach is non-trivial, as the agents are initially positioned arbitrarily in the graph. To address this challenge, we first introduce a cautious movement strategy that can be executed independently by each agent.

\begin{figure}
    \centering
    \includegraphics[width=1\linewidth]{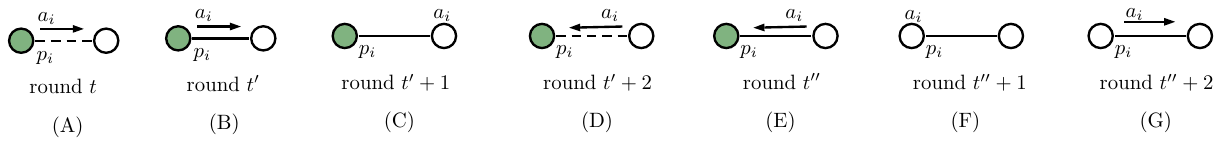}
    \caption{Illustration of a cautious movement performed by agent $a_i$. The green color indicates the node marked by $a_i$, where it records that it is performing cautious movement.}
    \label{fig:CM1}
\end{figure}

\begin{figure}
    \centering
    \includegraphics[width=0.35\linewidth]{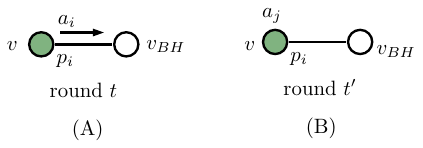}
   \caption{Illustration of a cautious movement performed by agent $a_i$. Upon entering the black hole, the agent is destroyed and leaves permanent information at node $v$.}
    \label{fig:CM12}
\end{figure}
\medskip
\noindent\textbf{Cautious movement ($\bm{CM}$)}: Let an agent $a_i$ be located at a node $v$ at an even round $t$. It computes a port $p_i$ through which it intends to exit $v$. Based on $a_i.state$, its decision is as follows. 

\begin{itemize}
    \item $\bm{a_i.state=explore}:$ It writes this information, along with its unique identifier, on the whiteboard. At the end of round $t$, it attempts to traverse port $p_i$ (refer to Figure
     \ref{fig:CM1}(A)). In round $t+1$, it checks whether the movement was successful. The agent continues attempting to traverse port $p_i$ in every subsequent even round until the movement succeeds. Let $t' \geq t$ be the first even round in which the traversal succeeds (refer to Figure
     \ref{fig:CM1}(B)). 

\vspace{0.15cm}
    \noindent If agent $a_i$ is alive at round $t'+1$ (refer to Figure
     \ref{fig:CM1}(C)), then in round $t'+2$ it attempts to return to node $v$ (refer to Figure
     \ref{fig:CM1}(D)). It continues this return attempt in every subsequent even round until it succeeds. Let $t''$ be the first round in which the agent returns to node $v$ (refer to Figure
     \ref{fig:CM1}(E)). In this case, at round $t''+1$ (refer to Figure
     \ref{fig:CM1}(F)), the agent verifies the outcome, deletes the corresponding information from the whiteboard at node $v$, and proceeds to traverse the port $p_i$ (refer to Figure
     \ref{fig:CM1}(G)).

 \vspace{0.15cm}
    \noindent If agent $a_i$ enters the node $v_{BH}$ and is destroyed  (refer to Figure
     \ref{fig:CM12}(A)), then any subsequent agent $a_j$ arriving at node $v$ can read the whiteboard and deduce that port $p_i$ leads to $v_{BH}$ (refer to Figure
     \ref{fig:CM12}(B)). This follows from the fact that, via global communication, $a_j$ can verify that no agent with identifier $a_i.ID$ is present in the system, implying that $a_i$ has been destroyed at $v_{BH}$. This strategy ensures that not all agents enter the node $v_{BH}$.
    \item $\bm{a_i.state=backtrack}:$ Agent $a_i$ traverses port $p_i$ without writing any information on the whiteboard.
\end{itemize}

\medskip
We now present the exploration algorithm that is used as a subroutine in our approach.

 \medskip
\noindent \textbf{Exploration using two agents}: Let $a_1$ and $a_2$ be co-located at some node. In Section~\ref{sec:onehop}, we presented \textsc{Modified\_Exp\_Algo}, which relies on $1$-hop visibility to detect missing edges. In the current setting, where agents have $0$-hop visibility, we adapt this algorithm as follows. Each round consists of two consecutive rounds: in the even round, an agent attempts to move if required; in the next odd round, it determines whether the movement was successful and broadcasts its ID via global communication. Moreover, the parameters $a_i.(prt, L_{G_1}, r)$ and $a_i.(ID_{L_{G_1}}, prt, r')$ are no longer required. This is because, after attempting a move in an even round, agent $a_2$ can obtain the necessary information during the next odd round via global communication, in particular, whether $a_1$ has successfully left the current node. Thus, the same algorithmic framework can be implemented in this setting by replacing each round of \textsc{Modified\_Exp\_Algo} with two consecutive rounds. By Theorem~\ref{thm:correctness_exp}, the following result holds.

\begin{theorem}\label{thm:exp_}
    While executing \textsc{Modified\_Exp\_Algo}, either $a_1$ or $a_2$ is able to determine that the exploration is complete within first $2\times 48m^2$ rounds.
\end{theorem}

In the next section, we provide the technical details of our algorithm to solve 1-BHSV.

\subsection{Algorithm with correctness analysis}
Let $\{a_1, a_2, a_3, \ldots, a_{\delta_{BH}+2}\}$ be the agents initially placed arbitrarily at safe nodes of $G$. Without loss of generality, assume that $a_i.ID < a_{i+1}.ID$ for every $i \in [1, \delta_{BH}+2]$. Since the agents are equipped with global communication, the two agents with the smallest identifiers (i.e., $a_1$ and $a_2$) initiate the execution of the exploration strategy for two agents, where each edge traversal is replaced by a cautious movement ($CM$) with slight modifications. All other agents store the IDs of the two minimum ID agents that are alive. When these two agents execute \textsc{Modified\_Exp\_Algo}, the agent with the smaller identifier acts as the leader and strictly follows its DFS path, while the other agent acts as the non-leader and may restart its DFS whenever required as per \textsc{Modified\_Exp\_Algo}.

During this process, the following cases arise, and the agents proceed accordingly. 
\begin{figure}
    \centering
    \includegraphics[width=0.5\linewidth]{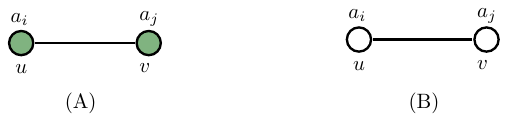}
    \caption{(A) At round $t$, agent $a_i$ writes information at node $v$ and moves to node $u$, while agent $a_j$ writes information at node $u$ and moves to node $v$, and (B) Upon reaching their destinations, agents $a_i$ and $a_j$ delete the information from their respective nodes.}
    \label{fig:placeholder112}
\end{figure}
\begin{itemize}
    \item \underline{Case 1}: Consider an even round $t$. If agent $a_i$ computes the port $p$ at even round $t$ for performing \textsc{Modified\_Exp\_Algo}, it first checks whether any information about port $p$ has been written on the whiteboard by another agent $a_j$ at node $v$. If such information exists, then in the next odd round $t+1$, it checks whether the agent that wrote this information is alive, and also broadcasts its own ID. If it does not hear from that agent, then it deduces that port $p$ leads to the black hole. Otherwise, from round $t+2$ onwards, it starts executing its $CM$. If no information related to $CM$ is available from any other agent, then $a_i$ independently executes its own cautious movement.

    \item \underline{Case 2:} It may happen that, while performing $CM$, at least one of the two agents executing the exploration enters the node $v_{BH}$ and is destroyed. In this case, the remaining agents detect this event via global communication during an odd round. This is because, in every odd round, the agents performing the exploration using $CM$ broadcast their IDs. Whenever the agents observe that at least one of these two agents is no longer alive, the two alive agents with the smallest identifiers begin executing the exploration using $CM$. All agents continuously update the IDs of the two minimum ID agents that are currently alive. Note that while moving, agents remove any DFS information that does not correspond to the two currently alive agents. Moreover, agents maintain a parameter (e.g., $dfs\_label$), as in Section~\ref{sec:onehop}, to distinguish between outdated and current information stored at nodes or in their local memory.

    \item \underline{Case 3:} Suppose agents $a_i$ and $a_j$ are located on opposite endpoints of a missing edge $(u,v)$, with $a_i$ at node $u$ and $a_j$ at node $v$. Agent $a_i$ has written its information at node $v$, and agent $a_j$ has written its information at node $u$ (refer to Figure \ref{fig:placeholder112}(A)). They both can understand this part due to global communication. In this case, $a_i$ deletes the information written by $a_j$ at node $u$, and $a_j$ deletes the information written by $a_i$ at node $v$ (refer to Figure \ref{fig:placeholder112}(B)). Thereafter, both agents continue their DFS traversal. This situation arises when both agents simultaneously initiate their $CM$. Therefore, deleting the information does not compromise the correctness of the execution.

    \item \underline{Case 4:} Consider an even round $t$. If agents $a_i$ and $a_j$ compute the port $p$ at even round $t$ for performing \textsc{Modified\_Exp\_Algo}, then they proceed as follows.

If one of them has $state = backtrack$, then both agents attempt to traverse port $p$ at round $t$. If the movement is successful, then both complete their movement. Otherwise, $a_j$ infers that $a_i$ is stuck and restarts its DFS from round $t+2$, as prescribed by the exploration algorithm.

If both agents have $state = explore$, then they do not traverse port $p$ simultaneously. The agent with the smaller identifier (say $a_i$) first traverses port $p$. If $a_i$ successfully moves through port $p$ in some even round $t_1 \geq t$, then in the next odd round $t_1+1$, agent $a_j$ checks whether it receives any message from $a_i$. If it does not hear from $a_i$ in round $t_1+1$, it deduces that port $p$ leads to $v_{BH}$. Otherwise, in round $t_1+2$, it traverses port $p$. In round $t_1+3$, if $a_j$ observes that its movement is successful, then it completes its cautious movement. Otherwise, it infers that $a_i$ is stuck and restarts its DFS from round $t_1+4$, as prescribed by the exploration algorithm. In this case, the agents do not need to write any information on the whiteboard regarding the port being verified for $CM$.

\item \underline{Case 5:} If $a_1$ (resp.\ $a_2$) determines that the exploration is complete, it broadcasts this information, and all agents, upon receiving it, terminate.
\end{itemize}

Based on this, we have the following theorem.
\begin{theorem}\label{thm:global}
    The problem of 1-BHSV can be solved by $\delta_{BH}+2$ agents starting from an arbitrary initial configuration in $O(\delta_{BH}\cdot |E|^2)$ rounds, when agents are equipped with global communication, $O(\log n)$ memory, and each node has $O(\log n)$ storage. 
\end{theorem}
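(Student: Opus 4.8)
\textbf{Proof plan for Theorem~\ref{thm:global}.}
The plan is to verify three things in order: correctness (some agent reaches a neighbour of $v_{BH}$ and identifies the offending port), the $O(\delta_{BH}\cdot m^2)$ time bound, and the $O(\log n)$ memory/storage bounds. For correctness, I would argue inductively on the number of agents that have been destroyed in $v_{BH}$. The key invariant is that at every round, at least one agent is actively executing the exploration strategy via $CM_1$ and is not permanently stuck: whenever the ``$A_1$-role'' agent is blocked on a missing edge, the ``$A_2$-role'' agent keeps moving by Lemma~\ref{lm:corr}, so by the analysis behind Lemma~\ref{lm:correctness} one of the two visits every node of $G\setminus\{e\}$ (for the blocking edge $e$) within $O(m^2)$ rounds. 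Since $CM_1$ is a faithful cautious implementation of each exploration step (an agent writes its intended port and ID on the whiteboard, probes, and only commits after confirming via global communication that it survived), the first time an agent would step into $v_{BH}$ it leaves behind an indelible record; any later agent arriving at that node reads the record, checks via global communication that the recorded agent no longer exists in $G$, and thereby identifies the port to $v_{BH}$ — i.e.\ the black hole is found. If instead the probing agent itself dies, the surviving agents detect its disappearance by global communication and the next-smallest-ID idle agent(s) take over the corresponding role(s), as enumerated in Cases~1--3; Case~4 (both agents deadlocked across one missing edge, each holding the other's pending $CM_1$ record) is resolved by having each delete the other's stale record and resume, and Case~5 handles the same-port collision exactly as the cautious walk in Section~\ref{sec:onehop}. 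Since at most $\delta_{BH}$ agents can ever enter $v_{BH}$ (each through a distinct incident port, and once a port to $v_{BH}$ is used the corresponding edge may vanish forever), after at most $\delta_{BH}$ deaths we still have $(\delta_{BH}+2)-\delta_{BH}=2$ agents alive, who then run the two-agent exploration strategy to completion and, by Lemma~\ref{lm:corr}, one of them reaches a neighbour of $v_{BH}$; combined with $CM_1$ this yields identification of the port. So in all cases the black hole is found.

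For the time complexity, each ``phase'' consists of one pair of agents (in the $A_1$/$A_2$ roles) running the exploration-via-$CM_1$ strategy; by Lemma~\ref{lm:corr} such a phase either finds $v_{BH}$ or completes a full exploration within $O(m^2)$ rounds, with the constant-factor blow-up from $CM_1$ (each exploration step becomes a bounded number of back-and-forth moves) already absorbed into the $O(m^2)$. A new phase is triggered only when an agent dies in $v_{BH}$, which happens at most $\delta_{BH}$ times, so there are $O(\delta_{BH})$ phases in total, giving $O(\delta_{BH}\cdot m^2)$ rounds overall. For the space bounds, each agent stores a constant number of agent IDs (its own, its current partner's, and role bookkeeping), a constant number of port numbers ($prt\_in$, $prt\_out$, the pending $CM_1$ port), the constant-size $state$ flag, and the $dfs\_label$ counter; since the whole execution lasts $O(\delta_{BH}\cdot m^2)=O(n^{O(1)})$ rounds, $dfs\_label$ never exceeds a polynomial in $n$ and fits in $O(\log n)$ bits, so each agent uses $O(\log n)$ memory. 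Each whiteboard holds the DFS parent pointers $wb_v(G_1).(parent)$ and $wb_v(G_2).(parent,dfs\_label)$ (as in Theorem~\ref{thm:1-hop}) plus a single pending $CM_1$ entry (one agent ID and one port), all of which is $O(\log n)$ storage per node.

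The main obstacle I anticipate is making the hand-off between phases airtight: when an agent dies, the surviving agents must agree, using only global communication and the whiteboard contents, on \emph{which} idle agent adopts \emph{which} role and at \emph{which} node it (re)starts its DFS, without two agents ever simultaneously probing the same potentially-fatal port and without losing the exploration progress already recorded on whiteboards. This requires carefully arguing that the $dfs\_label$ mechanism still lets a newly promoted $A_2$-role agent distinguish stale whiteboard data from its own, and that the ID-based tie-breaking (smallest idle ID takes the vacated role) is consistent across all surviving agents — which follows because global communication gives every agent the same view of who is alive. The remaining cases (Cases~4 and~5, and the interaction of $CM_1$ with a blocked $A_1$-agent) are routine once this bookkeeping is pinned down, so I would spend most of the write-up there and refer back to Lemma~\ref{lm:corr} for the exploration guarantee itself.
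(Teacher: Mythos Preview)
Your proposal is correct and follows essentially the same approach as the paper: phase the execution by deaths, invoke Lemma~\ref{lm:corr} for the $O(m^2)$ per-phase exploration guarantee, bound the number of deaths by $\delta_{BH}$ via the persistent $CM_1$ records, and finish with the two survivors. Your treatment is in fact more careful than the paper's about the hand-off bookkeeping and the Case~4/5 interactions; the only small slip is the parenthetical ``once a port to $v_{BH}$ is used the corresponding edge may vanish forever'' as the reason for distinct-port deaths---the actual reason (which you state correctly earlier) is that the $CM_1$ record left at the neighbour, verifiable via global communication, prevents any later agent from re-entering through that same port.
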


\begin{proof}
    Consider $\delta_{BH}+2$ agents $a_1, a_2, \ldots, a_{\delta_{BH}+2}$ placed arbitrarily at the start. The two agents with the smallest identifiers, namely $a_1$ and $a_2$, begin executing \textsc{Modified\_Exp\_Algo}, while the remaining agents stay idle. By Theorem~\ref{thm:exp_}, at least one of $a_1$ or $a_2$ visits every node of $G$ within $O(m^2)$ rounds.

If there is no black hole, then at least one of $a_1$ or $a_2$ determines that the exploration is complete, by Theorem~\ref{thm:exp_}. According to our algorithm, the agent that detects completion broadcasts this information, and upon receiving it, all agents terminate.
    
    If there is a black hole, at least one of them must eventually die at $v_{BH}$. The remaining agents detect this through global communication and let the two smallest ID survivors take over the roles of $a_1$ and $a_2$. This process repeats: in each $O(m^2)$ round phase, at most one agent dies at $v_{BH}$ while the other explores the graph and leaves information about the port which leads to $v_{BH}$ while doing $CM$. Because node storage persists and agents can check the validity of stored information through global communication, the system consistently accumulates correct knowledge of which ports lead to $v_{BH}$.  

    At most $\delta_{BH}$ agents die in $v_{BH}$. Once this happens, all neighbours of $v_{BH}$ have correct information pointing to which port leads to $v_{BH}$. One of the two surviving agents then explores the graph, reaches a neighbour of $v_{BH}$, and learns the port information, which leads to $v_{BH}$, thereby solving the problem. Since each phase takes $O(m^2)$ rounds and there can be at most $\delta_{BH}$ deaths, the total running time is $O(\delta_{BH}\cdot m^2)$. 
    
    Since two agents execute \textsc{Modified\_Exp\_Algo}, each node maintains DFS information for at most these two agents; any additional information can be discarded. Thus, each node requires $O(\log n)$ storage, and each agent requires $O(\log n)$ memory due to the arguments in Section~\ref{sec:onehop}.

    In this algorithm, an agent writes port information along with its ID while performing $CM$. In each round, this algorithm is performed by at most two agents, ensuring that the node storage remains $O(\log n)$. The memory requirement of each agent also remains $O(\log n)$, since each agent stores only two minimum ID agents that are alive.
    
    This completes the proof. 
\end{proof}

\section{Algorithm using 1-hop visibility and global communication}
In this section, we present an algorithm that solves the BHSV in the 1-interval connectivity problem using $n-1$ agents, initially positioned arbitrarily at the graph's nodes and equipped with global communication and 1-hop visibility. Recall from Section \ref{sec:model}, agents can sense the missing edge(s) with 1-hop visibility. The high-level idea of our algorithm is as follows. 

\medskip

\noindent \textbf{High-Level Idea:} Assume that in every round $r \geq 0$, each agent has complete knowledge of the snapshot graph $\mathcal{G}_r$ and can communicate globally. The main idea is to coordinate agent movements so that agents spread to previously unoccupied nodes while simultaneously identifying black-hole edges. Consider a path $P = v_1 \sim v_2 \sim \cdots \sim v_\ell$ in $\mathcal{G}_r$ such that $v_1$ contains at least two agents, each $v_i$ for $2 \leq i \leq \ell-1$ contains exactly one agent, and $v_\ell$ is empty. If the agent at $v_i$ moves to $v_{i+1}$ for each $i \in [1, \ell-1]$, then one agent reaches $v_\ell$ without creating any hole, allowing the agents to expand their coverage. Moreover, if an agent $a_{\ell-2}$ records the port used by agent $a_{\ell-1}$ at node $v_{\ell-1}$ to move toward $v_\ell$, then in the next round, using global communication, $a_{\ell-2}$ can infer whether $v_\ell$ is a black hole based on the survival of $a_{\ell-1}$. Hence, whenever an agent dies after leaving a node $v$, at least one surviving agent at $v$ can identify the outgoing port leading to the black hole.

Now, suppose that no node contains more than one agent. Then exactly $n-1$ nodes contain one agent each, and there is a unique empty node $v$. The goal is to determine whether $v$ is a black hole. If there exists a length-2 path $v_1 \sim v_2 \sim v_3$ with $v_3$ being a hole, then the agent $a_1$ at $v_1$ moves to $v_2$ and the agent $a_2$ at $v_2$ moves to $v_3$. Agent $a_1$ records the port taken by $a_2$. In the next round, if agent $a_1$ does not hear from agent $a_2$, then $a_1$ at node $v_2$ can conclude that the recorded port from node $v_2$ leads to the black hole. If no such path exists, then $\mathcal{G}_r$ has a star-like structure where the centre is the unique empty node. In this case, the minimum ID agent $a_i$ moves to the centre, and every other agent records, from its node, which port leads to the centre node. In the next round, if other agents do not hear from $a_i$, then every surviving agent can locally, from their position, know which port leads to a black hole; otherwise, they conclude that the centre node is safe.

The main challenge is how to form a map of $\mathcal{G}_r$ at the beginning of round $r$. In \cite{Ajay_dynamicdisp}, the authors have developed a tool which can form a partial map of $\mathcal{G}_r$. In our algorithm, we use this partial map based on \cite{Ajay_dynamicdisp}. Agents have $1$-hop visibility and global communication. Based on the distribution of agents in $\mathcal{G}_r$, we introduce the following notion.

\begin{definition}\label{def:1}
\textbf{(Connected components of agents in $\mathcal{G}_r$)}
The graph $\mathcal{G}_r=(V,E_r)$ can be partitioned into subgraphs 
$\mathcal{G}_r^1, \mathcal{G}_r^2, \ldots, \mathcal{G}_r^{\ell}$, where 
$\mathcal{G}_r^j = (V_j,E_r^j)$ for $1 \leq j \leq \ell$, such that:
\begin{enumerate}
    \item For every $u \in V_j$, at least one agent is present at $u$.
    \item For all $i \neq j$, $V_i \cap V_j = \emptyset$.
    \item There is no edge $e=(u_1,u_2) \in E_r$ with $u_1 \in V_j$ and $u_2 \in V_i$ for $i \neq j$.
\end{enumerate}
\end{definition}

We denote by $CCA(\mathcal{G}_r)$ the collection of all such subgraphs. Intuitively, $CCA(\mathcal{G}_r)$ represents the connected components of the subgraph of $\mathcal{G}_r$ induced by the nodes that host at least one agent. We use the following parameters in our algorithm.

\begin{itemize}
    \item $a_i.ID$: This parameter records the ID of agent $a_i$.
    \item $a_i.(p_{BH}, ID)$: This parameter records that the agent with identifier $ID$ has taken port $p_{BH}$. Initially, $a_i.(p_{BH}, ID) = (\bot, \bot)$.
    \item $a_i.r:$ This parameter records the round number. Initially, $a_i.r=0$. 

   \item $a_i.\textit{disp}$: A Boolean parameter indicating whether the agents have reached a configuration in which no node of $\mathcal{G}_r$ hosts more than one agent. Initially, $a_i.\textit{disp} = 0$. 
    
    \item $\alpha(v)$: This parameter denotes the number of agents located at node $v$.
    
    \item $ID(v)$: This parameter denotes the set of ID(s) which are present at node $v$.
\end{itemize}

\subsection{Algorithm}
Consider an agent $a_i$ located at node $v \in \mathcal{G}_r$. In round $r \geq 0$, the agent performs the following steps.

\medskip

\noindent \textbf{If} $ \bm{a_i.r(\bmod \;2)=0:}$ If $a_i.disp=1$, it terminates. Otherwise, agent $a_i$ increments its round counter $a_i.r \leftarrow a_i.r + 1$ and then executes the following three phases in round $r$.

\begin{itemize}
    \item \textbf{Phase 1 (1-hop view collection)}: 
If $a_i$ is not the minimum ID agent at node $v$, then it remains idle during this phase. Otherwise, it performs the following steps. Since agent $a_i$ has $1$-hop visibility, it can determine which incident port edges are present or missing. For each port $p \in \{0,1,\ldots,\deg(v)-1\}$ such that the corresponding edge exists, agent $a_i$ performs the following steps.

\begin{itemize}
    \item Let $u$ be the neighbor of $v$ reachable via port $p$.
    
    \item If there is at least one agent present at node $u$, define
    $C_v^{p} = (ID(v), p, ID(u))$.
    
    \item Otherwise (i.e., if node $u$ is a hole), define
    $C_v^{p} = (ID(v), p, \bot)$,
    where $\bot$ indicates that $u$ is a hole.
\end{itemize}

Let $C_v = \bigcup_{p} C_v^{p}$ denote the $1$-hop view of agent $a_i$ at node $v$, where $p$ ranges over all ports of $v$ for which an edge is present. Agent $a_i$ broadcasts $C_v$.
\item \textbf{Phase 2 (Graph reconstruction)}: Let $v_1, v_2, \ldots, v_{\ell}$ be the nodes that contain at least one agent at round $r$, and note that $v \in \bigcup_{i=1}^{\ell}\{v_i\}$. According to Phase~1, the minimum ID agent at node $v_i$ broadcasts its 1-hop view. Hence, agent $a_i$ obtains the information $C_{v_i}$ for every $i \in [1,\ell]$. Using these views, the agent $a_i$ located at node $v$ constructs the map $G'=(V',E')$ as follows.

\begin{itemize}
    \item The node set is defined as $V' = \left\{ ID_{v_i} \mid i \in [1,{\ell}]\right\}$, where $ID_{v_i}$ is the minimum ID from set $ID(v_i)$. 

    \item The edge set $E'$ is constructed as follows. For every pair of tuples $(ID(u_1), p_1, ID(u_2)$ and $(ID(u_2), p_2, ID(u_1)$, agent $a_i$ adds an undirected edge $(ID_{u_1}, ID_{u_2})$ with port labels $\pi(ID_{u_1}, ID_{u_2}) = p_1$ and $\pi(ID_{u_2}, ID_{u_1}) = p_2$. It stores $ID(u_1)$ at node $ID_{u_1}$, and $ID(u_2)$ at node $ID_{u_2}$. 
    \item For each tuple $(ID(u_1), p_1, \bot)$, agent $a_i$ marks port $p_1$ at node $ID_{u_1}$ leads to a hole.
\end{itemize}
\item \textbf{Phase 3 (Move)}: If $G'$ is disconnected, it considers the connected component $G''$ of $G'$ in which node $v$ (location of agent $a_i$ at round $r$) is present. This phase is divided into two cases based on the number of agents at node $v_i$.

\medskip
\noindent \underline{${\exists \;i\in [1,\ell]}$ such that ${\alpha(v_i)\geq 2}$:} Let $w_1, w_2, \ldots, w_{\lambda_1}$ be the multinode nodes in $G'$. Let $b_j$ be the minimum ID agent at node $w_j$. Without loss of generality, let $b_1.ID = \min\{ b_j.ID \mid j \in [1,\lambda_1] \}$. Consider the connected component (say $H$) of $G'$ where node $w_1$ is present. If $H\neq G''$, agent $a_i$ stays idle in this phase. Otherwise, it does the following.

    Let $w'_1, w'_2, \ldots, w'_{\lambda_2}$ be the nodes in $G''$ that have at least one port leading to a hole. Let $b'_j$ be the minimum ID active agent at node $w'_j$. Without loss of generality, let $b'_1.ID = \min\{ b'_j.ID \mid j \in [1,\lambda_2] \}$. Since agent $a_i$ is aware of $G''$, it considers a shortest path $P$ between $w_1$ and $w'_1$. If there are multiple shortest paths between $w_1$ and $w'_1$, it selects the lexicographically smallest among them. Let $P = \overline{u}_1(=w_1) \sim \overline{u}_2 \sim \ldots \sim \overline{u}_{\lambda}(=w'_1)$ be this lexicographically shortest path in $G''$ such that port $p_j$ leads from node $\overline{u}_j$ to node $\overline{u}_{j+1}$ for $1\leq j< \lambda$.
    
    The behaviour of agent $a_i$ is described by the following cases.

\begin{itemize}
    \item If $\lambda \geq 2$, it does the following.
    \begin{itemize}
    \item If $v = \overline{u}_j$ for some $1 \leq j < \lambda-1$, then agent $a_i$ moves to node $\overline{u}_{j+1}$. 
        \item If $v = \overline{u}_{\lambda-1}$, then agent $a_i$ stores
    $a_i.(p_{BH}, ID) = (p_{\lambda}, b_1'.ID)$,
    where $p_{\lambda}$ denotes the minimum available port from node $\overline{u}_{\lambda}$ leading to a hole.
    The agent then moves to node $\overline{u}_{\lambda}$.
    \item If $v = \overline{u}_\lambda$, then agent $a_i$ moves from node $\overline{u}_\lambda$ to a hole via the minimum available port.
    \item In all other cases, agent $a_i$ remains at node $v$.
    \end{itemize}

    \item If $\lambda =1$, one of the port of $\overline{u}_1(=w_1)$ leads to a hole. In this case, it does the following.
    \begin{itemize}
        \item If $v=\overline{u}_1$ and $a_1.ID=b_1.ID$, it moves via the minimum available port that leads to a hole.
        \item  If $v=\overline{u}_1$ and $a_1.ID\neq b_1.ID$, agent $a_i$ stores
    $a_i.(p_{BH}, ID) = (p_1, b_1.ID)$,
    where $p_1$ denotes the minimum available port from node $\overline{u}_1$ leading to a hole.
    \item In all other cases, agent $a_i$ remains at node $v$.
    \end{itemize}
\end{itemize}

\medskip
\noindent \underline{${\nexists \;i\in [1,\ell]}$ such that ${\alpha(v_i)\geq 2}$:} Agent $a_i$ sets $a_i.disp=1$. There are two possible cases based on the connected components of $G'$.
    
    \begin{itemize}
        \item \textbf{Case 1 (no connected component in $G'$ with at least two nodes)}: In this case, $n-1$ nodes are isolated. Let $w'_1, w'_2, \ldots, w'_{n-1}$ be nodes in $G'$ which has one agent. Let $b_j'$ be the agent at node $w_j'$. Without loss of generality, let $b'_1.ID = \min\{ b'_j.ID \mid j \in [1,n-1] \}$. If agent $a_i.ID=b'_1.ID$ moves through the minimum available port, which leads to a hole. Otherwise, it stores $a_i.(p_{BH}, ID) = (p_{\lambda}, b'_1.ID)$, where $p_{\lambda}$ denotes the port from node $v$ which leads to a hole.
        \item \textbf{Case 2 (At least one connected component in $G'$ with at least two nodes)}: If the size of $G''$ is 1, then agent $a_i$ terminates. Otherwise, it proceeds as follows. 
Let $w'_1, w'_2, \ldots, w'_{\lambda_2}$ be the nodes of $G''$ that have at least one port leading to a hole. 
Let $b'_j$ denote the minimum ID agent at node $w'_j$. 
Without loss of generality, assume that $b'_1.\text{ID} = \min \{ b'_j.\text{ID} \mid j \in [1,\lambda_2] \}$. Let $u_1, u_2, \ldots, u_{\lambda_3}$ be the neighbours of node $w'_1$, each containing exactly one agent. 
Let $b''_j$ denote the agent located at node $u_j$. 
Without loss of generality, assume that $b''_1.\text{ID} = \min \{ b''_j.\text{ID} \mid j \in [1,\lambda_3] \}$. The behaviour of agent $a_i$ is determined by the following cases.

     \begin{enumerate}
         \item If $a_i.ID\neq  b_1''.ID$ and $a_i.ID\neq b_1'.ID$, it terminates. 
         \item If $b_1''.ID= a_i.ID$, agent $a_i$ stores $a_i.(p_{BH}, ID) = (p_{\lambda}, b'_1.ID)$, where $p_{\lambda}$ denotes the minimum available port from the node $w'_1$ leading to a hole. Agent $a_i$ moves to node $w'_1$. 
         \item If $b_1'.ID= a_i.ID$, it moves via the minimum available port, which leads to a hole.
     \end{enumerate}
    \end{itemize}
\end{itemize}

\medskip
\noindent \textbf{If} $\bm{a_i.r(\bmod\; 2)\neq 0:}$ Agent $a_i$ increments its round counter $a_i.r \leftarrow a_i.r + 1$ and then executes the following three phases in round $r$.

\begin{itemize}
    \item \textbf{Phase 1 (Sharing 0-hop view)}: If agent $a_i$ is not the minimum ID agent at node $v$, it stays idle in this phase. Otherwise, it broadcasts $ID(v)$.

    \item \textbf{Phase 2 (Collection of 0-hop view)}: Let $v_1, v_2, \ldots, v_{\ell}$ be nodes with at least one agent at round $r$, and note that $v \in \bigcup_{i=1}^{\ell}\{v_i\}$. As per Phase 1, the minimum ID agent at node $v_i$ has broadcast $ID(v_i)$. Let $S=\cup_{i=1}^\ell ID(v_i)$.

    \item \textbf{Phase 3 (Checking Black Hole):} Based on the value of $a_i.disp$ and $a_i.(p_{BH}, ID)$, it does the following. 

    \begin{enumerate}
    \item $a_i.\textit{disp}=0$ and $a_i.(p_{BH},ID)=(\bot,\bot)$:  
    Agent $a_i$ does nothing in this phase.

    \item $a_i.\textit{disp}=0$ and $a_i.(p_{BH},ID)\neq(\bot,\bot)$:  
    If $ID \notin S$, agent $a_i$ declares that port $p_{BH}$ at node $v$ leads to the black hole.  
    Otherwise, it sets $a_i.(p_{BH},ID)=(\bot,\bot)$.

    \item $a_i.\textit{disp}=1$ and $a_i.(p_{BH},ID)=(\bot,\bot)$:  
    Agent $a_i$ terminates.

    \item $a_i.\textit{disp}=1$ and $a_i.(p_{BH},ID)\neq(\bot,\bot)$:  
    If $ID \notin S$, agent $a_i$ declares that port $p_{BH}$ at node $v$ leads to the black hole.  
    Otherwise, it declares that there is no black hole in the system and terminates.
\end{enumerate}
\end{itemize}

\subsection{Correctness and analysis of algorithm}
In this section, we prove the correctness of our algorithm, starting with the correctness of the graph reconstruction.

\begin{lemma}\label{lm:map}
If $r \equiv 0 \pmod{2}$ and all agents are alive, then agent $a_i$ knows $CCA(\mathcal{G}_r)$ at the end of Phase~2 of round $r$. Moreover, for every node in this reconstructed structure, the agent correctly identifies (i) which incident ports lead to holes and (ii) the collection of IDs present at those nodes.
\end{lemma}

\begin{proof}
Let $(w_1,w_2)$ be an edge in $CCA(\mathcal{G}_r)$, and let $\pi(w_1,w_2)=p_1$ and $\pi(w_2,w_1)=p_2$. Since $(w_1,w_2)$ belongs to $CCA(\mathcal{G}_r)$, both $w_1$ and $w_2$ contain at least one agent (by Definition~\ref{def:1}). Let $ID_{w_1}$ and $ID_{w_2}$ denote the minimum identifiers among the agents located at $w_1$ and $w_2$, respectively. Agent $a_i$ receives the $1$-hop views $C_{w_1}$ and $C_{w_2}$, and therefore $ID_{w_1}, ID_{w_2} \in V'$ according to Phase~2. Since $\pi(w_1,w_2)=p_1$ and $\pi(w_2,w_1)=p_2$, the tuples 
$C_{w_1}^{p_1} \in C_{w_1}$ and $C_{w_2}^{p_2} \in C_{w_2}$ are obtained. In Phase~2 these tuples are
$
C_{w_1}^{p_1}=(ID(w_1),p_1,ID(w_2)) \quad \text{and} \quad
C_{w_2}^{p_2}=(ID(w_2),p_2,ID(w_1)).
$ Hence, agent $a_i$ adds an undirected edge $(ID_{w_1},ID_{w_2})$ to $E'$ with port labels $\pi(ID_{w_1},ID_{w_2})=p_1$ and $\pi(ID_{w_2},ID_{w_1})=p_2$. Therefore, every edge of $CCA(\mathcal{G}_r)$ is correctly reconstructed.

Now consider a node $w_1 \in CCA(\mathcal{G}_r)$ such that one of its ports, say $p_1$, leads to a hole $w_2$. Since $w_1 \in CCA(\mathcal{G}_r)$, it contains at least one agent, and thus agent $a_i$ receives the view $C_{w_1}$. Since node $w_2$ is a hole, no view $C_{w_2}$ is received. As port $p_1$ of $w_1$ leads to $w_2$, we have $C_{w_1}^{p_1}=(ID(w_1),p_1,\bot).$ According to the rules of Phase~2, $ID_{w_1}\in V'$, and agent $a_i$ correctly records that port $p_1$ of node $ID_{w_1}$ leads to a hole.

Finally, for each node $ID_{w_1}$, the agent stores the set $ID(w_1)$ corresponding to the identifiers of the agents located at node $w_1$. Hence, agent $a_i$ knows the collection of IDs present at each node of the reconstructed structure. This completes the proof.
\end{proof}

We now show that in every even round in which all agents are alive and a multinode exists, either the number of holes decreases by one or one agent dies at the black hole.

\begin{lemma}\label{lm:corr}
Let $r \ge 0$ be an even round. If $n-1$ agents are alive and there exists at least one multinode, then either one agent dies at the black hole, or the number of nodes without agents decreases by one.
\end{lemma}

\begin{proof}
At an even round $r$, suppose there exists a multinode and that $n-1$ agents are alive. According to the algorithm, by the end of Phase~2 the agents construct the map $G'$. By Lemma~\ref{lm:map}, we have $G' = CCA(\mathcal{G}_r)$.

Let $w_1, w_2, \ldots, w_{\lambda_1}$ be multinode(s) in $G'$. Let $b_j$ denote the minimum ID agent at node $w_j$. Without loss of generality, assume $b_1.ID = \min \{ b_j.ID \mid j \in [1,\lambda_1] \}$. Let $H$ be the connected component of $G'$ that contains node $w_1$. Since some node in $G'$ is a multinode, there must be a node in $H$ with a port leading to a hole. According to the algorithm, all agents agree on a path $P$ in $H$ that leads to such a node. Let $P = u_1(=w_1) \sim u_2 \sim \ldots \sim u_{\lambda}$ be this path such that one of the ports of node $u_{\lambda}$ leads to a hole.

Since $P \subseteq H \subseteq CCA(\mathcal{G}_r)$, at least one agent is present at each node $u_j$ for every $j \in [1,\lambda]$ due to Def. \ref{def:1}. According to the algorithm, the minimum ID agent at node $u_j$ for each $j < \lambda$ moves to node $u_{j+1}$. The minimum ID agent at node $u_{\lambda}$ then moves through the minimum available port that leads to a hole. If this agent does not move to the black hole, then the number of holes decreases by one without creating any new holes. Otherwise, the agent located at $u_{\lambda}$ enters the black hole and is destroyed. This completes the proof.
\end{proof}

Now, we present the final theorem. 

\begin{theorem}\label{thm:global-1-hop}
The BHSV problem in $1$-interval connected graphs can be solved using $n-1$ agents in $O(n)$ rounds, while each agent uses $O(\log n)$ memory.
\end{theorem}
\begin{proof}
By Lemma~\ref{lm:corr}, if $n-1$ agents are alive at the beginning of round $r$ and there exists a multinode, then by the end of round $r$ either one agent dies at the black hole or the number of holes decreases by one. Let $r_1$ be the first even round in which there exists a multinode and an agent dies while following the path $P$ in Phase~3 of our algorithm. According to the algorithm, the agents agree on a path 
$P = u_1 \sim u_2 \sim \ldots \sim u_{\lambda}$ 
such that one of the ports of node $u_{\lambda}$ leads to a hole. Moreover, node $u_1$ is a multinode and each node $u_j$ for $j \in [2,\lambda]$ contains at least one agent.

First, consider the case $\lambda = 1$. In this case, one of the neighbors of node $u_1$ is a hole. Since $u_1$ contains at least two agents, the minimum ID agent, say $b_i$, moves through port $p_{\lambda}$ toward the hole. The remaining agent $a_i$ at node $u_1$ stores $a_i.(p_{BH},ID) = (p_{\lambda}, b_i.ID)$, where $p_{\lambda}$ is the port taken by agent $b_i$. Hence $a_i.(p_{BH},ID) \neq (\bot,\bot)$ and $a_i.\textit{disp} = 0$, since at the beginning of round $r_1$ there exists a multinode. If agent $b_i$ dies at the end of round $r_1$, then in round $r_1+1$ (an odd round), agent $a_i$ does not receive any information from $b_i$. Consequently, according to the algorithm, agent $a_i$ correctly concludes that port $p_{\lambda}$ leads to the black hole.

Now consider the case $\lambda \ge 2$. In this case, the agent at node $u_{\lambda-1}$ records the identifier of the agent at node $u_{\lambda}$ together with the port used by that agent to move toward the hole. In round $r_1+1$, if the agent that moved from $u_{\lambda}$ in round $r_1$ dies at the black hole, then the agent at $u_{\lambda-1}$ does not receive the corresponding identifier information. Therefore, according to the algorithm, it correctly concludes that the corresponding port leads to the black hole. Hence, whenever an agent dies while exploring a hole, the port leading to the black hole is correctly identified.

Therefore, by Lemma~\ref{lm:corr}, within the first $2(n-1)$ rounds either the black hole is detected, or the number of holes decreases until all nodes contain exactly one agent, i.e., the agents reach a dispersed configuration.

Suppose the agents reach such a configuration and let $r \le 2(n-1)$ be the first even round with no multinode. At the beginning of round $r$, all agents set $a_i.\textit{disp}=1$, since they receive no information about a multinode during communication. Two cases arise.

\medskip
\noindent \textbf{Case (i).} No connected component of $CCA(\mathcal{G}_r)$ contains more than one node. In this case, $\mathcal{G}_r$ forms a star graph. Let $v_1,v_2,\ldots,v_{n-1}$ be the nodes each containing exactly one agent at round $r$, $v$ is the center of the star, and each $v_i$ is a pendant node. According to the algorithm, an agent $a_i$ at node $v_i$ moves to node $v$. Another agent $a_j$ at node $v_j(\neq v_i)$ records $a_j.(p_{BH},ID) = (p_j,a_i.ID)$, where $p_j$ is the port from node $v_j$ leading to the hole. In round $r+1$, if agent $a_i$ dies, then no agent receives information from $a_i$ and thus correctly concludes that port $p_{BH}$ leads to the black hole. Otherwise, the agents conclude that no black hole exists and terminate safely.

\medskip
\noindent \textbf{Case (ii).} There exists a connected component $G''$ of $CCA(\mathcal{G}_r)$ containing at least two nodes. This is possible due to Lemma \ref{lm:map}, all agents are able to form a map of $CCA(\mathcal{G}_r)$. All agents located in components of size one terminate. According to the algorithm, a node $w_1'$ in $G''$ is selected such that one of its ports leads to a hole. Let $w$ be a neighbor of $w_1'$ containing an agent $\mathcal{A}$. Agent $\mathcal{A}$ records $\mathcal{A}.(p_{BH},ID) = (p_{\lambda},b'_1.ID)$, where $p_{\lambda}$ is the port taken by agent $b'_1$ from node $w_1'$ toward the hole. In round $r+1$, if $\mathcal{A}$ does not receive information from $b'_1$, it concludes that port $p_{BH}$ leads to the black hole. Otherwise, the agents correctly conclude that no black hole exists and terminate safely.

Therefore, the algorithm correctly solves the BHSV problem in $1$-interval connected graphs. Within the first $2(n-1)$ rounds, either the black hole is detected, or the agents reach a dispersed configuration. From a dispersed configuration, within the next two rounds, the agents either identify the black hole or correctly conclude that no black hole exists. Hence, the algorithm runs in $O(n)$ rounds. It's not difficult: all agents terminate within the next two rounds from round $r$ if no black hole is detected.

Regarding memory, each agent stores the parameters $a_i.ID$, $a_i.(p_{BH},ID)$, $a_i.\textit{disp}$, and the round counter $r$. Since $r = O(n)$, $p_{BH} \le n$, and each ID are polynomial in $n$, each parameter requires $O(\log n)$ bits. Therefore, each agent uses $O(\log n)$ memory.

This completes the proof.
\end{proof}

\section{Conclusion}

In this work, we study the BHSV problem in dynamic graphs under stronger communication and visibility models. We first establish fundamental lower bounds, showing that three co-located agents, $\delta_{BH}+1$ agents, and $n-2$ agents are insufficient to solve 1-BHSV and BHSV under various settings, even when agents are equipped with powerful capabilities such as global communication, 1-hop visibility, and unbounded memory. We then present matching or near-optimal (in terms of number of agents) algorithms under different models. In particular, we design an algorithm that solves 1-BHSV using four agents in the rooted setting with 1-hop visibility, and another algorithm that solves 1-BHSV using $\delta_{BH}+2$ agents in arbitrary configurations with global communication. Furthermore, we provide an algorithm for solving BHSV in 1-interval connected graphs using $n-1$ agents with both global communication and 1-hop visibility. Our results significantly reduce the gap between lower and upper bounds on the number of agents required to solve BHSV in dynamic graphs. They also demonstrate that enhanced capabilities, such as global communication and local visibility, can substantially improve the requirements of agents.

\bibliographystyle{unsrt}
\bibliography{Bib.bib}
\end{document}